\documentclass[letterpaper,journal,twoside]{IEEEtran}
\usepackage{url}
\usepackage[sort,nocompress]{cite}
\usepackage{amsmath,amssymb,amsfonts}
\usepackage[table,xcdraw]{xcolor}
\usepackage[ruled,linesnumbered,vlined]{algorithm2e}
\usepackage{array}
\usepackage{textcomp}
\usepackage{stfloats}
\usepackage{verbatim}
\usepackage{graphicx}
\usepackage{bbm}
\usepackage{multirow}
\usepackage{leftindex}
\usepackage{latexsym}
\usepackage{mathrsfs}
\usepackage{setspace}
\usepackage{booktabs}
\usepackage{color}
\usepackage{comment}
\usepackage[font=footnotesize,labelsep=period]{caption}
\usepackage{makecell}
\usepackage{threeparttable}
\usepackage{colortbl}
\usepackage{float}
\usepackage{lipsum}
\usepackage{bm}
\usepackage{balance}
\usepackage{cases}
\newtheorem{theorem}{Theorem}

\newtheorem{assumption}{Assumption}
\newtheorem{remark}{Remark}
\newtheorem{proposition}{Proposition}
\makeatletter
\renewcommand{\maketag@@@}[1]{\hbox{\m@th\normalsize\normalfont#1}}%
\makeatother
\usepackage[subrefformat=parens,justification=centering]{subcaption}
\newcolumntype{P}[1]{>{\centering\arraybackslash}m{#1}}
\usepackage{enumitem}

\usepackage{CJKutf8}

\newcommand{\citecolor}{blue}
\usepackage[colorlinks=true,allcolors=\citecolor,urlcolor=black]{hyperref}

\let\oldeqref\eqref
\renewcommand{\eqref}[1]{\textcolor{\citecolor}{\oldeqref{#1}}}

\usepackage{orcidlink}
\newcommand{\IEEEorcidlink}[1]{\raisebox{0.5mm}{\kern0.15mm\scalebox{1.2}{\orcidlink{#1}}}}

\newcommand{\email}[1]{\mbox{\href{mailto:#1}{#1}}}

\newcommand{\T}{\mathsf{T}}
\newcommand{\m}{\,\mathrm{m}}
\newcommand{\cm}{\,\mathrm{cm}}

\newcommand{\LED}{\text{LED}}
\newcommand{\pnp}{P\textit{n}P}
\newcommand{\freepnp}{Free\pnp}
\newcommand{\wrt}{w.r.t.}
\usepackage{tikz}
\newcommand{\circLED}{\tikz\draw[line width=1pt](0,0)circle(3.3pt);}
\newcommand{\rectLED}{\tikz\draw[line width=1pt](0,0)rectangle(6pt,6pt);}

\newcommand{\seqref}[1]{\scalebox{0.9}{\eqref{#1}}}

\begin{document}

\title{
Visible Light Positioning With Lam\'e Curve LEDs: A Generic Approach for Camera Pose Estimation
}

\author{%
    Wenxuan~Pan\IEEEorcidlink{0009-0000-7922-1583},~\IEEEmembership{Graduate~Student~Member,~IEEE}, Yang~Yang\IEEEorcidlink{0000-0002-1353-0890},~\IEEEmembership{Senior~Member,~IEEE}, Dong~Wei\IEEEorcidlink{0009-0001-6508-4490}, Zhiyu~Zhu\IEEEorcidlink{0000-0001-8548-5825}, Jintao~Wang\IEEEorcidlink{0009-0002-3335-633X}, Huan~Wu\IEEEorcidlink{0000-0003-1393-7049}, and~Yao~Nie\IEEEorcidlink{0000-0002-0716-6404}%
    \thanks{Received 2 February 2026. This work was supported in part by the National Natural Science Foundation of China under Grant 62371065 and Grant 61871047, and in part by BUPT Innovation and Entrepreneurship Support Program under Grant 2025-YC-S007. \textit{(Corresponding author: Yang~Yang.)}}%
    \thanks{Wenxuan~Pan and Yang~Yang are with Beijing Key Laboratory of Network System Architecture and Convergence, School of Information and Communication Engineering, Beijing University of Posts and Telecommunications, Beijing 100876, China (e-mail: \email{pwx@bupt.edu.cn}; \email{yangyang01@bupt.edu.cn}).}%
    \thanks{Dong~Wei is with the Institute of Information Engineering, Chinese Academy of Sciences, Beijing 100085, China (e-mail: \email{weidong@iie.ac.cn}).}%
    \thanks{Zhiyu~Zhu is with the College of Physics and Electronic Engineering, Shanxi University, Taiyuan 030006, China (e-mail: \email{zhiyu.zhu@sxu.edu.cn}).}%
    \thanks{Jintao~Wang, Huan~Wu, and Yao~Nie are with the School of Electronic Information and Artificial Intelligence, West Anhui University, Lu'an 237012, China (e-mail: \email{2025210028@mail.wxc.edu.cn}; \email{wuhuan@wxc.edu.cn}; \email{nieyao@wxc.edu.cn}).}
    \thanks{Color versions of one or more figures in this article are available at xxx.}%
    \thanks{Digital Object Identifier xxx}%
    \vspace{-0.5cm}%
}

\markboth{SUBMITTED TO AN IEEE JOURNAL FOR POSSIBLE PUBLICATION,~Vol.~X, No.~X, FEBRUARY~2026}%
{Pan \MakeLowercase{et al.}: A Sample Article Using IEEEtran.cls for IEEE Journals}

\IEEEpubid{
    \parbox{\textwidth}{\centering 0000-0000~\copyright~2026 IEEE.}
}

\maketitle

\begin{abstract}
Camera-based visible light positioning (VLP) is a promising technique for accurate and low-cost indoor camera pose estimation (CPE). To reduce the number of required light-emitting diodes (LEDs), advanced methods commonly exploit LED shape features for positioning. Although interesting, they are typically restricted to a single LED geometry, leading to failure in heterogeneous LED-shape scenarios. To address this challenge, this paper investigates Lam\'e curves as a unified representation of common LED shapes and proposes a generic VLP algorithm using Lam\'e curve-shaped LEDs, termed LC-VLP. 
In the considered system, multiple ceiling-mounted Lam\'e curve-shaped LEDs periodically broadcast their curve parameters via visible light communication, which are captured by a camera-equipped receiver. Based on the received LED images and curve parameters, the receiver can estimate the camera pose using LC-VLP. 
Specifically, an LED database is constructed offline to store the curve parameters, while online positioning is formulated as a nonlinear least-squares problem and solved iteratively. To provide a reliable initialization, a correspondence-free perspective-\textit{n}-points (FreeP\textit{n}P) algorithm is further developed, enabling approximate CPE without any pre-calibrated reference points. 
The performance of LC-VLP is verified by both simulations and experiments. {Simulations show that LC-VLP outperforms state-of-the-art methods in both circular- and rectangular-LED scenarios. Compared to a perspective arcs algorithm, LC-VLP can achieve reductions of both over 30\% in average position and rotation errors.} Experiments further show that LC-VLP can achieve an average position accuracy of less than 4 cm.
\end{abstract}

\begin{IEEEkeywords}
Camera pose estimation, Lam\'e curve, nonlinear optimization, visible light positioning (VLP).
\end{IEEEkeywords}

\section{Introduction}
\IEEEPARstart{I}{ndoor} positioning has become a fundamental enabling technology for a wide range of location-aware applications in smart buildings, robotics, and the Internet of Things (IoT). Although the global navigation satellite systems (GNSSs) are predominant and well-established solutions that can provide reliable positioning performance in outdoor scenarios, they face challenges indoors, including severe signal attenuation and degraded reliability\cite{yang2024,yao2024}. Under this background, a variety of indoor positioning technologies have been developed, such as WiFi\cite{yu2024}, Bluetooth low energy (BLE)\cite{pan2025}, ultra-wideband (UWB)\cite{wang2025}, radio frequency identification (RFID)\cite{xu2023}, and visible light positioning (VLP)\cite{Xu2025,Shi2025,SU2023,vpca,vovlp,He2024VLP,pan2025VLP}. Among these, the VLP technology has attracted increasing attention, as it can achieve high accuracy while leveraging existing lighting infrastructure at a low cost\cite{Zhu2024ASurvey}.

\IEEEpubidadjcol

{Depending on whether optical modulation is required, VLP can generally be categorized into modulated and unmodulated systems\cite{Alijani2025}. Moreover, according to whether the target carries a receiving device, VLP can also be classified as device-based and device-free systems\cite{Wang2020Passive,Alijani2026}. This work focus on the modulated, device-based VLP, which employs light-emitting diodes (LEDs) as transmitters, periodically broadcasting data packets containing essential information, such as the LED identities (IDs), via visible light communication (VLC).} To receive these signals, two types of receivers are commonly used: Photodiodes (PDs)\cite{Shi2025,Xu2025,SU2023} and cameras\cite{vpca,vovlp,He2024VLP,pan2025VLP}, each with its own advantages and limitations\cite{Zhu2024ASurvey,Bastiaens2024,Li2025}. Camera-based methods typically utilize complementary metal-oxide-semiconductor (CMOS) cameras to capture images of LED luminaires. By exploiting the rolling-shutter effect (RSE), the LED IDs can be detected and decoded, while visual features extracted from the captured images can be jointly used for 6-degrees of freedom (DoF) camera pose estimation (CPE). Besides their strong robustness to ambient light interference\cite{vpca,vovlp}, camera-based methods leverage the ubiquitous availability of cameras which support a wide range of applications beyond positioning. Consequently, camera-based methods are regarded as practical, flexible, and easy to integrate into existing systems\cite{Zhu2024ASurvey,Bastiaens2024}. {Nevertheless, camera-based VLP may still face several practical challenges, including relatively high computational complexity, increased power consumption, and potential privacy concerns associated with image acquisition and processing\cite{Saeed2019}.} However, achieving high-accuracy VLP with fewer LEDs while maintaining robustness across heterogeneous LED shapes remains a challenging open problem.

\subsection{Related Works}
Existing camera-based VLP methods can generally be categorized into two classes: Point-source-based methods\cite{Kuo2014,isasvg} and planar-source-based methods\cite{zhang2017,vpca,vovlp,cheng2020,liu2022,vp4l,vlcpnp,Yang2022}. Point-source-based methods exploit only the centroids of LED projections and its positioning process can be formulated as a perspective-\textit{n}-points (\pnp) problem\cite{zhang2023,epnp,opnp} when multiple LEDs are visible. However, the requirement of observing $\geq4$ LEDs imposes strict constraints on LED deployment density\cite{Bai2020}. Although auxiliary sensors such as inertial measurement units (IMUs) can be introduced to reduce the number of required LEDs\cite{isasvg}, they inevitably increase system complexity and hardware cost.

In contrast, planar-source-based methods explicitly exploit the geometric shape of LEDs for positioning. By incorporating fine-grained shape information, these methods typically require fewer visible LEDs, and have therefore become the dominant paradigm. However, existing studies are usually restricted to a single specific LED shape, such as circular LEDs\cite{zhang2017,vpca,vovlp,cheng2020,liu2022} or rectangular LEDs\cite{vp4l,vlcpnp,Yang2022}. The works in\cite{vovlp,vpca} and\cite{zhang2017} proposed VLP approaches that leverage the geometric characteristics of circular LEDs without other auxiliary sensors. Specifically, Zhang et al.\cite{zhang2017} adopted a weak-perspective projection model which guarantees solution uniqueness at the cost of reduced accuracy. To overcome this limitation, Zhu et al.\cite{vpca} employed the pinhole projection model and resolved the inherent ambiguity by exploiting an additional visible LED. They further introduced the visual odometry (VO)\cite{vovlp} to enable positioning with a single circular LED under dynamic camera motion scenarios. For rectangular LEDs, Bai et al.\cite{vp4l} proposed using the perspective-4-line (P4L) algorithm for CPE. However, this approach requires VLC information to be transmitted at a specific LED corner, which is difficult to implement on commercial off-the-shelf rectangular luminaires\cite{Narasimman2024}. Moreover, the works in\cite{cheng2020,liu2022,vlcpnp} and\cite{Yang2022} introduced additional IMUs or magnetometers to assist in CPE. As discussed earlier, such auxiliary sensors not only increase hardware complexity but also suffer from degraded reliability under electromagnetic interferences, thereby limiting their practical applicability. To summarize, although the above works\cite{zhang2017,vpca,vovlp,cheng2020,liu2022,vp4l,vlcpnp,Yang2022} exploit LED shape feature for positioning, they are restricted to a single specific LED geometry, and thus they fail in heterogeneous LED-shape scenarios or when other common LED shapes, e.g., rhombic or elliptical LEDs, are deployed. Moreover, these methods rely on purely geometric solutions, which are sensitive to image noises since the camera pose is not refined through global iterative optimization. 

Motivated by the above limitations, a unified representation that can characterize diverse planar LED shapes is highly desirable. \textit{Lam\'e curves}\cite{Kajikawa2025,NI2016,Masaya2015}, also referred to as the \textit{superellipses}, constitute a general family of parametric curves that naturally subsume common LED shapes as special cases. {In practical indoor environments, commercial LED luminaires are manufactured in various planar geometries, including circular, rectangular, rhombic, and elliptical shapes, which can be naturally modeled by Lam\'e curves.} Therefore, introducing Lam\'e curves into VLP can provide a unified framework that accommodates multiple LED shapes within a single algorithm. Nevertheless, this generality comes at the cost of more complex curve equations, which renders conventional purely geometric methods inapplicable. Therefore, exploiting Lam\'e curve LEDs for VLP deserves further investigation.

\subsection{Contributions}
The primary contribution of this paper is a generic VLP algorithm based on Lam\'e curve-shaped LEDs, termed LC-VLP. \textit{To the authors' best knowledge, this is the first generic VLP approach that unifies the modeling of various commonly-used LED shapes and achieves full 6-DoF CPE.} Our key contributions are summarized as follows:
\begin{itemize}
    \item We investigate a unified representation for commonly-used LED shapes, namely Lam\'e curves. {For the first time, Lam\'e curves are introduced into VLP to provide a unified parametric modeling that naturally covers a wide range of practical planar LED geometries, including circular, rectangular, rhombic, and elliptical shapes commonly found in commercial luminaires. To accommodate shape diversity in realistic heterogeneous LED deployments, an LED database construction method is further developed to associate each LED with its corresponding Lam\'e-curve parameters, enabling a single unified framework without algorithm switching.}
    \item Based on the Lam\'e-curve modeling of LEDs with different shapes, we propose a generic LC-VLP algorithm that achieves higher accuracy and robustness in heterogeneous and mixed LED deployments. Specifically, LC-VLP is built upon a \textit{back-projection strategy}, where the LED projection points are back-projected onto the ceiling plane. A nonlinear least-squares (NLLS) optimization model is then formulated, where iterative refinement is performed to jointly exploit all observed LED contours for accurate 6-DoF CPE.
    \item The NLLS optimization requires a reliable initial estimate to ensure convergence. To this end, we propose a correspondence-free \pnp\ (\freepnp) algorithm. Unlike conventional \pnp\ methods, \freepnp\ does not require prior knowledge of pre-calibrated 3D--2D reference point (RP) correspondences, yet it is still capable of coarse CPE. Specifically, we theoretically prove a \textit{collinearity invariance theorem}. Based on this theorem and an assumption of uniform distribution of LED contour points, multiple approximate virtual correspondences can be constructed. The camera pose is initialized by applying a \pnp\ solver to these virtual correspondences.
\end{itemize}

{Substantial simulation results show that LC-VLP outperforms state-of-the-art (SoTA) methods in both circular- and rectangular-LED scenarios. Compared to a perspective arcs algorithm\cite{vpca} and a VLC-based \pnp\ algorithm\cite{vlcpnp}, LC-VLP can achieve reductions of both over 30\% in average position and rotation errors.} We further implement LC-VLP by developing an experimental prototype in a practical scenario. Based on this prototype, experiments are implemented, which show that LC-VLP can achieve an average position accuracy of less than 4 cm.

\subsection{Notations and Paper Organization}
In this paper, matrices and column vectors are denoted by uppercase and lowercase boldface, e.g., $\bm{R}$ and $\bm{v}$, respectively, with $(\cdot)^\T$ denoting their transposition. For vector $\bm{v}$, $\|\bm{v}\|$ represents its Euclidean norm, and $\widetilde{\bm{v}}\triangleq[\bm{v}^\T,1]^\T$ denotes its homogeneous representation. For square matrix $\bm{R}$, $\det(\bm{R})$ and $\mathrm{tr}(\bm{R})$ denote its determinant and trace, respectively. Sets are denoted by calligraphic uppercase letters, e.g., $\mathcal{A}$, with $|\mathcal{A}|$ indicating its cardinality, and $\{a_i\}_{i=1}^N$ denoting a sequence. Points are represented by uppercase letters such as $P$, and $\overline{P_1P_2}$ denotes the segment connecting points $P_1$ and $P_2$.

The rest of the paper is organized as follows. Section~\ref{sec:syst} introduces the system model. In Section~\ref{sec:lcvlp}, the proposed \freepnp\ and LC-VLP algorithms are presented in detail. Then, Section~\ref{sec:simexp} presents the simulation and experimental setup and results. Finally, Section~\ref{sec:con} concludes this paper.

\section{System Model and Problem Formulation}\label{sec:syst}

\subsection{System Model}

\begin{figure}[!t]
    \centering
    \includegraphics[width=0.41\textwidth]{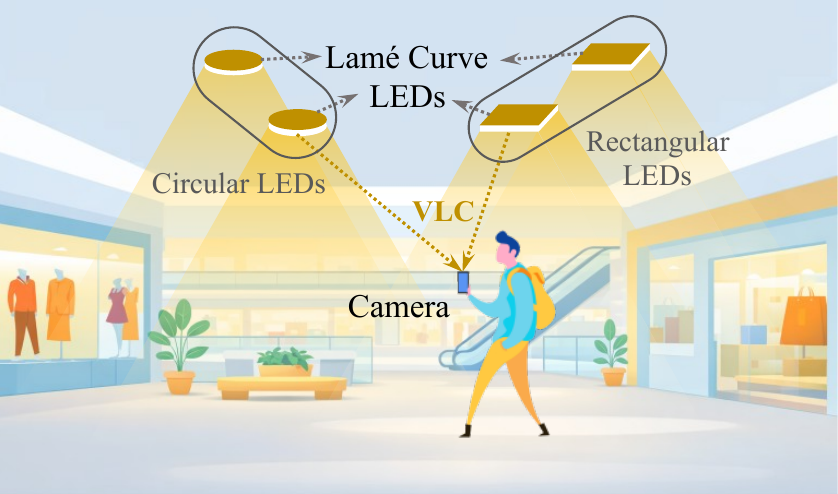}
    \caption{{The considered VLP scenario, where the circular and rectangular LEDs on the ceiling are uniformly modeled as Lam\'e curve-shaped LEDs.}}
    \label{fig:sce}
\end{figure}

The considered typical indoor VLP scenario is illustrated in Fig.~\ref{fig:sce}. {In this scenario, multiple modulated Lam\'e curve-shaped LEDs are installed in the ceiling. These LEDs function as transmitters that continuously broadcast VLC information while providing illumination.} In practical environments, such LEDs often exhibit heterogeneous shapes due to mixed lighting infrastructures, e.g., circular downlights combined with rectangular panel lights in offices, or the coexistence of legacy luminaires and newly retrofitted LED fixtures in shopping malls and transportation hubs. The positioning device is equipped with a camera capable of simultaneously receiving both VLC signals and visual information from the LEDs, thereby enabling real-time, onboard estimation of its own 6-DoF pose. In particular:

\subsubsection{Lam\'e Curve-Shaped LEDs}
Lam\'e curves\cite{Kajikawa2025,NI2016,Masaya2015} are commonly used as design elements, particularly in the shapes of LED luminaires. On the plane $xOy$, consider a Lam\'e curve with center $(x_0,y_0)^\T$. Its major and minor axes are aligned with the $x$- and $y$-axes, with lengths $2a$ and $2b$, respectively, and its order is denoted by $\gamma$. Then, the equation of this curve is given by: 
\begin{equation}\label{eqn:lame}
    \big|(x-x_0)/a\big|^{\gamma} + \big|(y-y_0)/b\big|^{\gamma} = 1.
\end{equation}
The order $\gamma$ determines the geometric shape of this curve: 
\begin{itemize}
    \item When $\gamma=1$, the curve becomes a rhombus, and in the special case $a = b$, it reduces to a square (with vertices lying on the coordinate axes).
    \item When $\gamma=2$, the curve becomes an ellipse, and in the special case $a=b$, it reduces to a circle.
    \item As $\gamma\rightarrow+\infty$, the curve approaches a rectangle, and in the special case $a=b$, it reduces to a square. In practice, a sufficiently large value of $\gamma$, such as $100$ or larger, can be used to obtain an accurate approximation.
\end{itemize}

The Lam\'e curve formulation in \eqref{eqn:lame} can provide a unified representation for common LED shapes such as circles\cite{vpca} and rectangles\cite{vp4l}. This unification establishes the theoretical foundation for the proposed generic LC-VLP algorithm.

\begin{figure}[!t]
    \centering
    \includegraphics[width=0.485\textwidth]{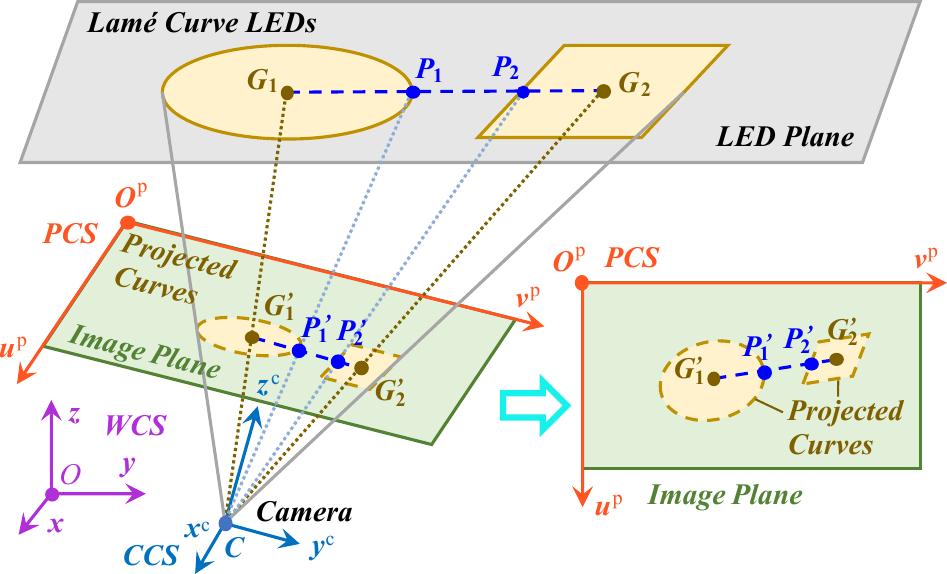}
    \caption{The system projection diagram.}
    \label{fig:syst}
\end{figure}

\subsubsection{Projection Model}
As shown in Fig.~\ref{fig:syst}, we adopt the pinhole projection model\cite{cvref} to describe the camera imaging process. First, we establish the following coordinate systems: (i) the 3D world coordinate system (WCS) $O$--$xyz$, (ii) the 3D camera coordinate system (CCS) $C$--$x^\mathrm{c}y^\mathrm{c}z^\mathrm{c}$, and (iii) the 2D pixel coordinate system (PCS) $u^\mathrm{p}O^\mathrm{p}v^\mathrm{p}$. Among these, in WCS, the $z$-axis points vertically upward toward the ceiling. The origin of CCS $C$ coincides with the camera’s optical center; Moreover, to ensure visibility of the LEDs, the $z^{\mathrm{c}}$-axis is oriented approximately upward and is perpendicular to the image plane. The PCS lies on the image plane, and in 3D space, its $u^{\mathrm{p}}$- and $v^{\mathrm{p}}$-axes are parallel to the $x^\mathrm{c}$- and $y^\mathrm{c}$-axes of CCS, respectively. In the following, the coordinates of a point expressed in the WCS/CCS/PCS are referred to as its world/camera/pixel coordinates.

Consider a 3D point $\bm{x}^\mathrm{c}$ in CCS, whose projection onto the image plane yields a 2D image point $\bm{u}^\mathrm{p}$. This projection process can be modeled as:
\begin{equation}
\label{eqn:c2p}
    z^\mathrm{c}\widetilde{\bm{u}}^\mathrm{p} = \bm{K}\bm{x}^\mathrm{c},
\end{equation}
where $\bm{K}\triangleq\left[\begin{smallmatrix}f_x&0&u_0\\0&f_y&v_0\\0&0&1\end{smallmatrix}\right]$ is the intrinsic matrix of camera $C$ which can be calibrated in advance. In matrix $\bm{K}$, $f_x$ and $f_y$ are the focal lengths in pixels; and $(u_0, v_0)^\T$ represent the pixel coordinates of the principal point. 

We further consider the transformation between CCS and WCS. Since the two coordinate systems share the same scale, the relationship between the world coordinates $\bm{x}$ of a point in 3D space and its camera coordinates $\bm{x}^\mathrm{c}$ is given by:
\begin{equation}
\label{eqn:w2c}
    \bm{x}^\mathrm{c} = \bm{R}\bm{x}+\bm{t},
\end{equation}
where $\bm{R}\in\mathfrak{SO}(3)$ is the rotation matrix representing the camera’s orientation, $\mathfrak{SO}(3)$ represents the 3D special orthogonal group, and $\bm{t}\in\mathbb{R}^3$ is the translation vector representing the camera’s position offset. Matrix $\bm{R}$ and vector $\bm{t}$ constitute the camera extrinsics.

\subsection{Problem Formulation}

The camera's 6-DoF pose includes its world coordinates $\bm{c}$ and its 3D rotation matrix $\bm{R}$. The objective of this work is to estimate this 6-DoF pose, given the pre-calibrated intrinsic matrix $\bm{K}$, the world-coordinate Lam\'e curve equations $\mathcal{E}_i$ of the $N$ observed LEDs (for $i=1,\ldots,N$), and the corresponding sets of pixel-coordinate projection points $\mathcal{I}_i$ on the image plane. Moreover, if additional pre-calibrated RPs are available, these RPs can be incorporated to further assist the positioning process. We denote the set of world coordinates of these RPs and the set of their corresponding pixel-coordinate projection points as $\mathcal{M}$ and $\mathcal{M}'$, respectively.

Based on the aforementioned system model, the camera optical center coincides with the origin of CCS. In the field of computer vision (CV), it is common to use the optical center as the representation of the camera\cite{cvref}; thus, the camera coordinate of camera $C$ is simply $\mathbf{0}$. Substituting $\bm{c}^{\mathrm{c}}=\mathbf{0}$ into \eqref{eqn:w2c} yields:
\begin{equation}
\label{eqn:tar}
    \bm{c}=-\bm{R}^\T\bm{t},
\end{equation}
from which we can see that estimating the world coordinates of camera $C$ is equivalent to solving for the extrinsics $\bm{R}$ and $\bm{t}$. Accordingly, we formulate the positioning process as: 
\begin{subequations}
\begin{gather}
    \bm{R},\bm{t}=\bm{f}\big(\{\mathcal{I}_i\}_{i=1}^N,\mathcal{M'};\{\mathcal{E}_i\}_{i=1}^N,\mathcal{M}\big) \tag{\theequation}\label{eqn:prob1}\\
    \mathrm{s.t.}~~\bm{R}\in\mathfrak{SO}(3),\;\bm{t}\in\mathbb{R}^{3},\;-\bm{R}^\T\bm{t}\in\mathcal{U}. \label{eqn:prob1-1}
\end{gather}
\end{subequations}
In \eqref{eqn:prob1}, the function $\bm{f}(\cdot)$ denotes the mapping used to compute the extrinsics. In \eqref{eqn:prob1-1}, the constraints $\bm{R}\in\mathfrak{SO}(3)$ and $\bm{t}\in\mathbb{R}^{3}$ specify the admissible forms of the rotation matrix and translation vector, respectively, while the constraint $-\bm{R}^\T\bm{t}\in\mathcal{U}$ ensures that the estimated camera position lies within a predefined feasible region $\mathcal{U}$, e.g., not above the ceiling or outside the indoor environment, etc.

To address problem \eqref{eqn:prob1}, we propose an LC-VLP algorithm that enables a generic and unified solution for camera position and orientation estimation with Lam\'e curve-shaped LEDs of arbitrary order. The details of the proposed algorithm are presented in the following sections.

\section{Generic VLP With Lam\'e Curve LEDs}\label{sec:lcvlp}

This section presents the proposed LC-VLP method. First, a database of Lam\'e curve-shaped LEDs is constructed, and a local auxiliary coordinate system is defined for each LED. Then, leveraging the proposed projection collinearity invariance theorem, a \freepnp\ algorithm is developed to provide an initial estimate of the camera 6-DoF pose without requiring pre-calibrated 3D--2D RP correspondences. Finally, a back-projection strategy is adopted to formulate an NLLS optimization model, which is solved using nonlinear optimization methods. The computational complexity of LC-VLP is also analyzed. Detailed descriptions are provided below.

\subsection{LED Database Construction}\label{sec:leddata}

In the considered scenario, the positions, shapes, and sizes of the Lam\'e curve-shaped LEDs may differ from each other. Consequently, an LED database should be constructed in advance to store these parameters. When positioning, the parameters of the captured LEDs are received by the CMOS camera via VLC and RSE\cite{vpca,vovlp}. 

In WCS, assume there are $N'$ LEDs on the ceiling plane, whose equation is $z=z_0$. The boundary curve of the $i$-th LED is denoted by $\mathcal{E}_i$, $1\leq i\leq N'$, and its center is $\overline{\bm{x}}_i=(\overline{x}_i,\overline{y}_i,z_0)^\T$. Then, the equation of $\mathcal{E}_i$ is given by:
\begin{subequations}\label{eqn:ledlame}
\begin{numcases}{}
    \big|(x-\overline{x}_i)/{a_i}\big|^{\gamma_i} + \big|(y-\overline{y}_i)/{b_i}\big|^{\gamma_i} = 1,\label{eqn:ledlame1}\\
    z=z_0.
\end{numcases}
\end{subequations}

To facilitate subsequent point sampling on the curve, we transform \eqref{eqn:ledlame} into a cylindrical coordinate system (CyCS) for auxiliary processing. The origin, $x$- and $z$-axes of this CyCS coincide with those of WCS, and its scale is also identical to that of WCS. Under this transformation, the equation of LED curve $\mathcal{E}_i$ becomes:
\begin{equation}
    \label{eqn:polar}
    \begin{bmatrix}x\\y\\z\end{bmatrix} = \begin{bmatrix}\overline{x}_i\\\overline{y}_i\\z_0\end{bmatrix} + \rho_i(\theta) \begin{bmatrix}\cos\theta\\\sin\theta\\0\end{bmatrix} \triangleq \overline{\bm{x}}_i+\rho_i(\theta)\cdot\bm{\psi}(\theta).
\end{equation}
Substituting \eqref{eqn:polar} into \eqref{eqn:ledlame1}, we can obtain:
\begin{subequations}\label{eqn:cycs}
\begin{numcases}{}
    \rho_i(\theta) = \big(|\cos\theta|^{\gamma_i}/a_i^{\gamma_i}+|\sin\theta|^{\gamma_i}/b_i^{\gamma_i}\big)^{-1/\gamma_i},\\
    z=z_0,
\end{numcases}
\end{subequations}
which are the equations of LED curve $\mathcal{E}_i$ in CyCS. 

Note that the Lam\'e curve in \eqref{eqn:ledlame}--\eqref{eqn:cycs} assumes that its major axis is aligned with the $x$-axis of WCS. In practice, however, some Lam\'e curve-shaped LEDs may not satisfy this alignment. Even so, the proposed algorithm can readily accommodate such case. Suppose the angle between the major axis of the $i$-th LED curve $\mathcal{E}_i$ and the $x$-axis of WCS is $\phi_i$. We can then define a local WCS, denoted $\text{WCS}_i$, obtained by rotating the global WCS counterclockwise around the global $z$-axis by $\phi_i$. Similarly, the corresponding $\text{CyCS}_i$ can be obtained in the same manner. In $\text{CyCS}_i$, the curve $\mathcal{E}_i$ assumes the form of \eqref{eqn:polar} and \eqref{eqn:cycs}. {By rotating \eqref{eqn:cycs} back into the global CyCS, the resulting equation of $\mathcal{E}_i$ is given by:
\begin{equation}\label{eqn:cycsi}
    \bm{e}_i=\overline{\bm{x}}_i+\rho_i(\theta-\phi_i)\cdot\bm{\psi}(\theta-\phi_i),
\end{equation}
where $\bm{e}_i$ represents a point on the curve $\mathcal{E}_i$.} It is therefore evident that employing the CyCS as an auxiliary representation significantly simplifies the subsequent steps of the algorithm. 

Based on the above analysis, the required parameters for each LED include the center $(\overline{x}_i,\overline{y}_i,z_0)^\T$, the major and minor axes $a_i$ and $b_i$, the order $\gamma_i$, and the major-axis orientation $\phi_i$. Accordingly, the LED database is constructed as:
\begin{equation}
    \mathcal{D} = \{(\overline{x}_i,\overline{y}_i,a_i,b_i,\gamma_i,\phi_i)\}_{i=1}^{N'}.
\end{equation}
{As long as the LED parameters are properly calibrated, LC-VLP can consistently perform subsequent contour sampling and CPE for LEDs with arbitrary major-axis orientations.}

\subsection{Correspondence-Free \pnp}\label{sec:fpnp}

In this subsection, our target is to obtain an initial estimate for the subsequent nonlinear optimization. Since the LEDs are Lam\'e curve-shaped rather than simply circular or rectangular, \pnp\ methods\cite{epnp,opnp} become the primary candidates for generating such an initialization. However, as discussed earlier, \pnp\ methods require $\geq4$ pre-calibrated RPs in the environment, and the 3D--2D correspondences must be known\cite{Bai2020}. Although in typical VLP scenarios, the LED centers can be pre-calibrated\cite{vpca,vovlp}, relying solely on LED centers generally requires $\geq4$ LEDs to be simultaneously captured by the camera, which imposes a stringent constraint on the LED layout. To address this limitation, we propose the \freepnp\ method that approximates the solution without relying on enough explicit 3D--2D correspondences, but providing a reliable initialization for the subsequent stage.

First, if the LED centers are pre-calibrated, these 3D--2D correspondences can be directly used; otherwise, the centers of their projection curves can be used as approximations\cite{vpca}. Then, we propose the following theorem:
\begin{theorem}[Invariability of Collinearity]\phantomsection\label{thm:ioc}
    Under the pinhole projection model \eqref{eqn:c2p} and \eqref{eqn:w2c}, given three distinct points on a plane that does not contain the camera optical center, then these points are collinear in WCS \textit{if and only if} their projected points on the image plane are collinear in PCS.
\end{theorem}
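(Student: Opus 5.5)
The plan is to reduce both directions to one algebraic fact: the pinhole model \eqref{eqn:c2p}--\eqref{eqn:w2c}, restricted to a plane $\Pi$ that avoids the optical center, is a projective transformation (a homography) between $\Pi$ and the image plane, and homographies preserve collinearity in both directions.

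\emph{Step 1: write the map as a $3\times3$ matrix.} Pick an affine parametrization of $\Pi$, namely $\bm{x}=\bm{p}_0+s\bm{d}_1+t\bm{d}_2$ with $\bm{d}_1,\bm{d}_2$ linearly independent. Combining \eqref{eqn:w2c} and \eqref{eqn:c2p} gives
\begin{equation*}
z^\mathrm{c}\widetilde{\bm{u}}^\mathrm{p}=\bm{K}\big[\bm{R}\bm{d}_1,\;\bm{R}\bm{d}_2,\;\bm{R}\bm{p}_0+\bm{t}\big]\,[s,t,1]^\T\triangleq\bm{H}\,[s,t,1]^\T .
\end{equation*}

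\emph{Step 2: show $\bm{H}$ is invertible.} Since $\bm{K}$ and $\bm{R}$ are nonsingular, $\det\bm{H}=\det\bm{K}\,\det\bm{R}\,\det[\bm{d}_1,\bm{d}_2,\bm{p}_0+\bm{R}^\T\bm{t}]$. Using \eqref{eqn:tar}, $\bm{R}^\T\bm{t}=-\bm{c}$, so the last column is $\bm{p}_0-\bm{c}$. This determinant vanishes exactly when $\bm{p}_0-\bm{c}$ lies in $\mathrm{span}(\bm{d}_1,\bm{d}_2)$, i.e., when $\bm{c}\in\Pi$, which the hypothesis excludes. This step is where the assumption of the theorem is used, and it is the main point to get right.

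\emph{Step 3: express collinearity as a determinant.} Three planar points with homogeneous coordinates $\bm{q}_1,\bm{q}_2,\bm{q}_3=[s_k,t_k,1]^\T$ are collinear iff $\det[\bm{q}_1,\bm{q}_2,\bm{q}_3]=0$, because they lie on a common line $\bm{\ell}^\T\bm{q}=0$. The same holds for the pixel points $\widetilde{\bm{u}}^\mathrm{p}_k$. Moreover, collinearity in WCS of points of $\Pi$ is equivalent to collinearity of their parameters $(s_k,t_k)$, since the parametrization is affine and injective.

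\emph{Step 4: compare the two determinants.} From $z^\mathrm{c}_k\widetilde{\bm{u}}^\mathrm{p}_k=\bm{H}\bm{q}_k$ we get
\begin{equation*}
\Big(\prod_k z^\mathrm{c}_k\Big)\det[\widetilde{\bm{u}}^\mathrm{p}_1,\widetilde{\bm{u}}^\mathrm{p}_2,\widetilde{\bm{u}}^\mathrm{p}_3]=\det\bm{H}\,\det[\bm{q}_1,\bm{q}_2,\bm{q}_3].
\end{equation*}
For the depths to be well defined, the points must have nonzero depth $z^\mathrm{c}_k$. This is implicit in the points being projected at all, and consistent with the setup where the LEDs are visible in front of the camera. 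With every $z^\mathrm{c}_k\neq0$ and $\det\bm{H}\neq0$, one determinant vanishes iff the other does, which proves both directions at once.

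Two side remarks are worth stating in the write-up. First, distinct points map to distinct pixels because $\bm{H}$ is invertible, so the statement is non-degenerate. Second, linear independence of the $\widetilde{\bm{u}}^\mathrm{p}_k$ is unaffected by the nonzero scalings $z^\mathrm{c}_k$.
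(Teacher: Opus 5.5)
Your proof is correct. It follows a different route from the paper's on the world side.

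\textbf{The paper's route.} The paper never parametrizes the plane. It relates $\det[\widetilde{\bm{u}}_1^\mathrm{p},\widetilde{\bm{u}}_2^\mathrm{p},\widetilde{\bm{u}}_3^\mathrm{p}]$ to $\det[\bm{x}_1^\mathrm{c},\bm{x}_2^\mathrm{c},\bm{x}_3^\mathrm{c}]$ by rescaling with the depths and $\bm{K}$. This uses the same nonzero-depth assumption you flag. It then handles the two directions separately:
\begin{itemize}
\item For sufficiency, it carries collinearity through the rigid map to CCS and does not use the plane hypothesis at all.
\item For necessity, a vanishing camera-coordinate determinant only says the three points are coplanar with the optical center. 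So the paper writes a dependence $\sum_k\alpha_k\bm{x}_k^\mathrm{c}=\mathbf{0}$ and pulls it back to $\sum_k\alpha_k\bm{x}_k=(\sum_k\alpha_k)\bm{c}$. It then uses $\bm{c}\notin\Pi$ to exclude $\sum_k\alpha_k\neq0$, which leaves an affine dependence.
\end{itemize}

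\textbf{Your route.} Your identity is a sharper factorization of that same determinant:
\begin{equation*}
\det[\bm{x}_1^\mathrm{c},\bm{x}_2^\mathrm{c},\bm{x}_3^\mathrm{c}]=\det[\bm{d}_1,\bm{d}_2,\bm{p}_0-\bm{c}]\,\det[\bm{q}_1,\bm{q}_2,\bm{q}_3].
\end{equation*}
This replaces the case analysis on $\sum_k\alpha_k$ with the single fact $\det\bm{H}\neq0\iff\bm{c}\notin\Pi$.

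\textbf{What each buys.} Your approach gives both directions from one equation. It also isolates exactly where the hypothesis enters and shows why it is needed: if $\bm{c}\in\Pi$, every triple on $\Pi$ images to collinear pixels. You also avoid points the paper leaves implicit, such as all $\alpha_k$ being nonzero before it divides by $\alpha_1$. The paper's route, in exchange, avoids choosing a parametrization. It also makes visible that the forward direction holds for any collinear triple with nonzero depths.

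One cosmetic fix: rename your plane parameter $t$, since it clashes with the translation $\bm{t}$ in Step~1.
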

\begin{IEEEproof}
    Please refer to Appendix~\ref{app:ioc}.
\end{IEEEproof}

\begin{remark}\phantomsection\label{rem}
For clarity, we illustrate Theorem~\ref{thm:ioc} in Fig.~\ref{fig:syst}. Let the centers of $\LED_1$ and $\LED_2$ be denoted by $G_1$ and $G_2$, respectively, and their projections on the image plane by $G_1'$ and $G_2'$. Consider the point $P_1$ defined as the intersection between segment $\overline{G_1G_2}$ and the boundary of $\LED_1$. According to Theorem~\ref{thm:ioc}, the projection of this point, $P_1'$, must lie on the segment $\overline{G_1'G_2'}$. Since $P_1'$ also lies on the projection curve of $\LED_1$, it follows that $P_1'$ must be the intersection between the projection curve of $\LED_1$ and $\overline{G_1'G_2'}$. Likewise, the point $P_2'$ must be the intersection between the projection curve of $\LED_2$ and $\overline{G_1'G_2'}$. Consequently, we obtain two additional pairs of 3D--2D correspondences beyond the LED centers, i.e., $(P_1,P_1')$ and $(P_2,P_2')$.
\end{remark}

Then, we make the following assumption:
\begin{assumption}\phantomsection\label{ass:dens}
The contour points extracted from the LED's projection curve are densely and approximately uniformly distributed with respect to (w.r.t.) the arc length.
\end{assumption}

\begin{remark}\phantomsection\label{rem:2}
Assumption~\ref{ass:dens} is reasonable since the projection of each LED can typically occupy a sufficiently large number of pixels. Under this condition, edge detection and contour extraction methods\cite{Gong2018} are capable of producing dense and consecutive pixel samples along the boundary. Furthermore, by appropriately adjusting the camera exposure time, the contrast between the LED and the background can be significantly enhanced\cite{vpca}. This can stabilize the edge detection process and consequently result in a contour point set that is more evenly distributed along the projected curve.
\end{remark}

Assume that camera $C$ captures the projections of $N$ LEDs. Based on Theorem~\ref{thm:ioc} and Assumption~\ref{ass:dens}, we aim to sample point sets $\mathcal{E}_i^{\flat}$ and $\mathcal{I}_i^{\flat}$ from the boundary curve $\mathcal{E}_i$ of $\LED_i$ and its projected contour point set $\mathcal{I}_i$, respectively, for applying a \pnp-based initial estimation. Note that the points in $\mathcal{E}_i^{\flat}$ and $\mathcal{I}_i^{\flat}$ are expected to approximately satisfy the 3D--2D projection correspondences, thereby ensuring the estimation reliability.

We first aim to identify the 3D--2D correspondence $(P_i, P_i')$ described in Remark~\ref{rem} for each $\LED_i$. On the image plane, let the center of the projection of $\LED_i$ be denoted by $\overline{\bm{u}}_i$. According to Remark~\ref{rem}, an additional $\LED_\iota$ is required to assist in determining $(P_i, P_i')$; without loss of generality, we set $\iota=(i\bmod N)+1$. Then, from the contour point set $\mathcal{I}_i$, we select the pixel point $\bm{u}_{i0}^\mathrm{p}$ whose direction (\wrt\ $\overline{\bm{u}}_{i}$) is closest to the vector $\overline{\bm{u}}_{\iota}-\overline{\bm{u}}_{i}$ as an approximation of $P_i'$:
\begin{equation}\label{eqn:p_i'}
    \bm{u}_{i0}^\mathrm{p} = \mathop{\arg\min}_{\bm{u}\in\mathcal{I}_i}\,\langle \bm{u}-\overline{\bm{u}}_{i}, \overline{\bm{u}}_{\iota}-\overline{\bm{u}}_{i} \rangle,
\end{equation}
where $\langle\cdot,\cdot\rangle$ represents the included angle between two vectors. In WCS, let the center of $\LED_i$ be denoted by $(\overline{x}_i,\overline{y}_i,z_0)^\T$. Similarly, the $\LED_\iota$ is employed to assist in determining $P_i$. According to \eqref{eqn:cycsi}, the angle between the vector $(\overline{x}_\iota-\overline{x}_i,\overline{y}_\iota-\overline{y}_i,0)^\T$ and the $x_i$-axis of $\text{WCS}_i$, i.e., the polar angle of $P_i$, is given by:
\begin{equation} \label{eqn:beta_i}
    \beta_i = \mathrm{atan2}(\overline{y}_{\iota}-\overline{y}_{i},\overline{x}_{\iota}-\overline{x}_{i})-\phi_i,
\end{equation}
Consequently, the world coordinates of $P_i$ is obtained by:
\begin{equation}\label{eqn:p_i}
    \bm{x}_{i0}=\overline{\bm{x}}_i+\rho_i(\beta_i)\cdot\bm{\psi}(\beta_i).
\end{equation}

From the above analysis, we obtain a 3D--2D correspondence pair $(P_i,P_i')$ for each $\LED_i$. However, we observe that as illustrated in Fig.~\ref{fig:syst}, when $N=2$, the four points $G_1$, $G_2$, $P_1$, and $P_2$ are collinear, which is still insufficient to determine the extrinsics of camera $C$. Therefore, we further propose a contour sampling strategy to augment the set of approximate 3D--2D correspondences.


Specifically, in WCS, we uniformly sample the curve $\mathcal{E}_i$ \wrt\ the polar angle measured from the LED center, and obtain the 3D sampled set $\mathcal{E}_i^{\flat}$. We take the polar angle $\beta_i$ of $P_i$ in $\text{WCS}_i$ as the starting angle, and uniformly sample within the interval $[0, 2\pi)$ to obtain the following set of angles:
\begin{equation} \label{eqn:3dthe}
    \bm{\Theta}_i=\big\{\beta_i,\beta_i+\tfrac{1}{M}2\pi,\dots,\beta_i+\tfrac{M-1}{M}2\pi\big\}\bmod2\pi.
\end{equation}
Then, in a manner consistent with \eqref{eqn:p_i}, the 3D sampling point set $\mathcal{E}_i^{\flat}$ can be obtained as:
\begin{equation} \label{eqn:3dset}
    \mathcal{E}_i^{\flat}= \big\{ \overline{\bm{x}}_i+\rho_i(\theta)\cdot\bm{\psi}(\theta)\mid\theta\in\bm{\Theta}_i \big\}.
\end{equation}

To obtain the corresponding 2D sampling set, we propose the following proposition:
\begin{proposition}\phantomsection\label{pro:samp}
{With moderate camera tilt angles and sufficient camera-to-LED distance, uniform polar-angle sampling along the LED curve yields image projections that approximately correspond to uniform arc-length sampling on the projected contour.}
\end{proposition}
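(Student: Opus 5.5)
The plan is to model the map from the ceiling plane $z=z_0$ to the image plane as a planar homography and show that, under the stated conditions, it is close to an affine map, whose Jacobian is nearly constant over the LED footprint. Combining \eqref{eqn:c2p} and \eqref{eqn:w2c}, a point $\bm{x}=\overline{\bm{x}}_i+\bm{s}$ with $\bm{s}=(s_x,s_y,0)^\T$ projects to
\begin{equation*}
\widetilde{\bm{u}}^\mathrm{p}=\frac{\bm{K}(\bm{R}\overline{\bm{x}}_i+\bm{t}+\bm{R}\bm{s})}{\bm{r}_3^\T\overline{\bm{x}}_i+t_3+\bm{r}_3^\T\bm{s}},
\end{equation*}
where $\bm{r}_3^\T$ is the third row of $\bm{R}$. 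Write $d_i=\bm{r}_3^\T\overline{\bm{x}}_i+t_3$ for the depth of the LED center. The relative depth variation over the LED is then $\epsilon_i=\bm{r}_3^\T\bm{s}/d_i$. Its magnitude is bounded by $\max(a_i,b_i)\sin\alpha/d_i$, where $\alpha$ is the angle between the optical axis and the ceiling normal.

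\textbf{Step 1.} Expand to first order in $\epsilon_i$. The image point becomes $\bm{u}(\bm{s})=\overline{\bm{u}}_i+\bm{A}_i\bm{s}+O(\epsilon_i\|\bm{s}\|)$. Here $\bm{A}_i\in\mathbb{R}^{2\times 2}$ is the Jacobian of the homography at the LED center, and it is nonsingular because, as in Theorem~\ref{thm:ioc}, the plane does not contain the optical center. Moderate tilt together with sufficient distance makes $\epsilon_i\ll 1$. The projection is therefore an affine map up to a small error.

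\textbf{Step 2.} Parametrize the curve by the polar angle, $\bm{s}(\theta)=\rho_i(\theta)\bm{\psi}(\theta)$ as in \eqref{eqn:cycs}, with arc-length density $\|\bm{s}'(\theta)\|$ in the world. The image arc-length density is
\begin{equation*}
\Big\|\tfrac{d\bm{u}}{d\theta}\Big\|=\|\bm{A}_i\bm{s}'(\theta)\|\,(1+O(\epsilon_i)).
\end{equation*}
Note that uniform polar-angle sampling is \emph{not} exactly uniform in world arc length unless $\rho_i$ is constant. The proposition therefore has to be read in the comparative sense: in the world, $\theta$-sampling and arc-length sampling differ by the weight $\|\bm{s}'(\theta)\|$, and the image weight $\|\bm{A}_i\bm{s}'\|$ is related to it through a fixed linear map.

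Under moderate tilt and roughly equal focal lengths, $\bm{A}_i\approx(f/d_i)\bm{Q}_i+O(\alpha)$, where $\bm{Q}_i$ is a planar rotation. This follows because $\bm{R}$ restricted to the ceiling plane is nearly an in-plane rotation when $\alpha$ is small. Then $\|\bm{A}_i\bm{s}'\|\approx(f/d_i)\|\bm{s}'\|$, so the image arc-length measure induced by $\theta$-sampling equals, up to a constant factor, the world measure, with relative error $O(\alpha+\epsilon_i)$.

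\textbf{Step 3.} Integrate the density. This shows that the normalized cumulative image arc length $\ell(\theta)/\ell(2\pi)$ deviates from the world quantity by $O(\alpha+\epsilon_i)$. Consequently, the image points of the samples in \eqref{eqn:3dthe}, taken in order from $P_i'$, approximately match points chosen in the image at the same normalized arc-length positions, also starting from $\bm{u}_{i0}^\mathrm{p}$. Under Assumption~\ref{ass:dens}, those image points are available from $\mathcal{I}_i$.

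The main obstacle is the precise meaning of ``uniform arc-length'' when $\rho_i$ varies, for example near rectangle corners with large $\gamma_i$. I would handle it by stating the result as matching of normalized arc-length positions, with error $O(\alpha+\epsilon_i)$, rather than literal uniformity. I would also note that the error grows with the anisotropy of $\bm{A}_i$, which is exactly what the moderate-tilt hypothesis controls.
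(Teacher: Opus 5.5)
Your proposal is correct in substance and takes a genuinely different route from the paper's, and it is the sounder one at the decisive step. The paper differentiates $\bm{\varpi}\circ\bm{\zeta}$ pointwise and bounds $\|\bm{\Xi}\|_2$ by $\Omega_{\text{FoV}}$. This gives the one-sided estimate $\mathrm{d}s/\mathrm{d}\theta\le (f/z^\mathrm{c})\,\Omega_{\text{FoV}}\,\mathrm{d}\ell/\mathrm{d}\theta$. It then infers from $f\Omega_{\text{FoV}}\ll z^\mathrm{c}$ that $\mathrm{d}s/\mathrm{d}\theta$ varies slowly, hence $\mathrm{d}s\propto\mathrm{d}\theta$. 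That inference does not go through, for three reasons. An upper bound gives no proportionality. A small common factor $f/z^\mathrm{c}$ rescales $\mathrm{d}s/\mathrm{d}\theta$ without reducing its relative variation. And the paper controls neither the anisotropy of the local map nor the variation of $z^\mathrm{c}$ and $\bm{\Xi}$ along the curve; the anisotropy is the ratio of singular values of $\bm{\Xi}\bm{R}$ on ceiling directions, while $\|\bm{\Xi}\|_2$ is only the largest singular value of $\bm{\Xi}$.

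You control both effects. $\epsilon_i$ handles the variation over the footprint, and near-conformality of $\bm{A}_i$ handles the anisotropy. Together they give the two-sided relation $\|\mathrm{d}\bm{u}/\mathrm{d}\theta\|\approx(f/d_i)\|\bm{s}'(\theta)\|$, which is what a comparison of sampling densities needs.

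Your caveat is also correct, and the paper misses it. At zero tilt (with $f_x=f_y$) the image is an exact similarity of the LED at every distance. So $\mathrm{d}s/\mathrm{d}\theta\propto\|\bm{s}'(\theta)\|=(\rho_i^2+(\rho_i')^2)^{1/2}$, which within the Lam\'e family is constant only for $\gamma_i=2$, $a_i=b_i$; for a square it varies by a factor of $2$ along each side. Hence the literal statement holds only for circular LEDs, and what can be proved is your normalized-arc-length matching.

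Make its consequence explicit. The $j$-th 2D sample should sit at contour index about $|\mathcal{I}_i|\,\ell_i(\theta_j)/L_i$ (world arc length from $P_i$ over perimeter), not at $j|\mathcal{I}_i|/M$ as in \eqref{eqn:2dind}. The gap $|\ell_i(\theta_j)/L_i-j/M|$ does not shrink with distance or tilt, so it is an intrinsic part of the \freepnp\ initialization error. The paper's route is shorter. Yours identifies which hypotheses matter and separates the perspective error from this shape-dependent one.

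A few details to tighten.

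Step~1 is in fact exact. With $\bm{m}_0$ the normalized image coordinates of the projected center, $\bm{u}(\bm{s})-\overline{\bm{u}}_i=\bm{A}_i\bm{s}/(1+\epsilon_i(\bm{s}))$, where $\bm{A}_i\bm{s}=d_i^{-1}\mathrm{diag}(f_x,f_y)[\bm{I}_2,-\bm{m}_0]\bm{R}\bm{s}$; this is the paper's $\bm{\Xi}$ evaluated at the center. Differentiating this identity shows that the extra term in Step~2 has relative size of order $\kappa(\bm{A}_i)\max_\theta\rho_i(\theta)\sin\alpha/d_i$, with $\kappa$ the condition number. Here $\overline{\bm{u}}_i$ must be the exact image of $\overline{\bm{x}}_i$, not the contour centroid.

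Your bound $|\epsilon_i|\le\max(a_i,b_i)\sin\alpha/d_i$ fails for $\gamma_i>2$, since near a rectangle corner $\rho_i\approx(a_i^2+b_i^2)^{1/2}$. Use $\max_\theta\rho_i(\theta)$ instead.

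In $\bm{A}_i\approx(f/d_i)\bm{Q}_i$, the upper-left $2\times2$ block of $\bm{R}$ has singular values $1$ and $\cos\alpha$, so it deviates from a rotation only at $O(\alpha^2)$. The first-order anisotropy is the rank-one term $\bm{m}_0(R_{31},R_{32})$, of norm at most $\|\bm{m}_0\|\sin\alpha$. So your $O(\alpha)$ constant depends on the LED's off-axis position. This is where a FoV bound like $\Omega_{\text{FoV}}$ is genuinely needed, plus a term $|f_x/f_y-1|$ if the focal lengths differ. It also means that, for a given tilt and image position, distance enters only through $\epsilon_i$.

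Finally, the paper uses $\bm{r}_3$ for the third \emph{column} of $\bm{R}$; rename your third row to avoid a clash.
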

\begin{IEEEproof}
    Please refer to Appendix~\ref{app:samp}.
\end{IEEEproof}

\begin{remark}\phantomsection\label{rem:3}
{In indoor VLP scenarios, the moderate camera tilt angles and sufficient camera-to-LED distances in Proposition~\ref{pro:samp} are generally satisfied, since excessively large tilt angles or extremely short distances usually lead to insufficient visible LEDs within the camera field of view (FoV), which should be avoided in practice.}
\end{remark}

According to Assumption~\ref{ass:dens}, the extracted contour points are approximately uniformly distributed \wrt\ the arc length. Consequently, uniformly sampling these contour points is equivalent to approximately performing uniform arc-length sampling along the projected curve. Moreover, by Proposition~\ref{pro:samp}, once a 3D--2D correspondence pair is determined as the starting points, uniform arc-length sampling on the projected curve corresponds to the projection of uniform polar-angle sampling on the original LED curve. Therefore, within $\mathcal{I}_i$, we assign the index $0$ to $\bm{u}_{i0}^\mathrm{p}$ and re-index the remaining points in $\mathcal{I}_i$ in a counterclockwise order. This $\bm{u}_{i0}^\mathrm{p}$ is recognized as the stating point. Then, we partition the points in $\mathcal{I}_i$ evenly into $M$ segments and select the following indices to form the sampling index set:
\begin{equation} \label{eqn:2dind}
    \mathcal{J}_i=\big\{ 0,\tfrac{1}{M}|\mathcal{I}_i|,\dots,\tfrac{M-1}{M}|\mathcal{I}_i| \big\},
\end{equation}
and thus, the 2D sampling set is defined as $\mathcal{I}^{\flat}_i=\{\bm{u}_{ij}^\mathrm{p}\}_{j\in\mathcal{J}}$, which can be approximately regarded as the 2D projections of the 3D sampling point set $\mathcal{E}^{\flat}_i$.

Finally, for all $N$ captured LEDs, we aggregate all the 3D sampling sets $\mathcal{E}_i^{\flat}$ and the corresponding 2D sampling sets $\mathcal{I}_i^{\flat}$, and apply a \pnp\ algorithm, such as the O\pnp\cite{opnp}, to obtain an initial estimate of the camera extrinsics:
\begin{equation}\label{eqn:pnp}
    \bm{R}^{(0)},\bm{t}^{(0)} = \text{P\textit{MN}P}\big(\textstyle\{\mathcal{E}_i^{\flat}\}_{i=1}^N,\{\mathcal{I}_i^{\flat}\}_{i=1}^N;\bm{K}\big).
\end{equation}

\begin{algorithm}[!t]
    \caption{\freepnp\ Algorithm}
    \small
    \label{alg:freepnp}
    \KwIn{ For $1\leq i\leq N$:
        LED curves $\mathcal{E}_i$; LED centers $\overline{\bm{x}}_i$\;
        \hspace{3em} Projected contour sets $\mathcal{I}_i$; Projected LED centers $\overline{\bm{u}}_i$\;
        \hspace{3em} Camera intrinsic matrix $\bm{K}$\;
    }
    \KwOut{
        Estimate of camera extrinsics: $\bm{R}^{(0)}$ and $\bm{t}^{(0)}$.
    }
    \For{$i=1:N$}{
        Calculate $\iota=(i\bmod N)+1$\;
        Obtain an approximation of $P_i'$, $\bm{u}_{i0}^\mathrm{p}$, by \seqref{eqn:p_i'}\;
        Calculate the starting polar angle $\beta_i$ by \seqref{eqn:beta_i}\;
        Sample points by \seqref{eqn:3dthe} and \seqref{eqn:3dset} to obtain the 3D set $\mathcal{E}_i^{\flat}$\;
        Reorder $\mathcal{I}_i$ counterclockwise with starting point $\bm{u}_{i0}^\mathrm{p}$\;
        Sample points by \seqref{eqn:2dind} to obtain the 2D set $\mathcal{I}_i^{\flat}$\;
    }
    Apply a \pnp\ algorithm to $\{\mathcal{E}_i^{\flat}\}_{i=1}^N$ and $\{\mathcal{I}_i^{\flat}\}_{i=1}^N$ by \seqref{eqn:pnp}\;
\end{algorithm}

The \freepnp\ algorithm is summarized in Algorithm~\ref{alg:freepnp}. Note that in Free\pnp, Assumption~\ref{ass:dens} and Proposition~\ref{pro:samp} introduce an approximation to the 3D--2D correspondences, which inevitably leads to estimation errors. However, these errors are typically within an acceptable range. Moreover, \freepnp\ is only employed to provide an initial estimate of the camera extrinsics for the subsequent nonlinear optimization stage. Therefore, the overall accuracy of LC-VLP is not adversely affected. The nonlinear optimization-based refinement step will be detailed in the next subsection.

\subsection{Nonlinear Optimization-Based Refinement}\label{sec:nobr}

To refine the $\bm{R}^{(0)}$ and $\bm{t}^{(0)}$ obtained in Section~\ref{sec:fpnp}, a commonly adopted approach is to formulate a reprojection-based optimization model\cite{pan2025VLP}. Conventional methods typically require $\geq4$ accurately known pairs of 3D--2D correspondences, and involve projecting 3D points onto the image plane. However, in the considered scenario it is generally not guaranteed that a sufficient number of exact 3D--2D correspondences are available. Moreover, directly projecting Lam\'e curves onto the image plane can result in complicated shape distortions, which makes it difficult to handle the optimization process. 

To address these issues, we first propose a \textit{back-projection strategy}, in which the 2D points on the image plane are back-projected onto the ceiling plane where the optimization is then carried out. On the image plane, for an LED projected point $k'$ with pixel coordinate $\bm{u}_{k'}^{\mathrm{p}}$, its back-projected point $k$ in CCS can be obtained from \eqref{eqn:c2p} as:
\begin{equation}
    \bm{x}_k^\mathrm{c} = z_k^\mathrm{c}\cdot(\bm{K}^{-1}\widetilde{\bm{u}}_{k'}^\mathrm{p}),
\end{equation}
where $z_k^\mathrm{c}$ denotes the depth of point $k$ in CCS and is unknown. Let $\bm{d}_k \triangleq \bm{K}^{-1}\widetilde{\bm{u}}_{k'}^\mathrm{p}$. Then, according to \eqref{eqn:w2c}, the world coordinate of back-projected point $k$ can be expressed as:
\begin{equation}
\label{eqn:k_c2w}
    \widehat{\bm{x}}_k=\bm{R}^\T(\bm{x}_k^\mathrm{c}-\bm{t})=\bm{R}^\T(z_k^\mathrm{c}\bm{d}_k-\bm{t}).
\end{equation}
We next solve for the depth $z_k^\mathrm{c}$. Since the LEDs are mounted on the ceiling, the back-projected point $k$ should lies on the ceiling plane, and thus its vertical coordinate in WCS satisfies $z_k = z_0$. Moreover, let $\bm{R}\triangleq[\bm{r}_1,\bm{r}_2,\bm{r}_3]$ which can be substituted into \eqref{eqn:k_c2w} and yields:
\begin{equation}
    \bm{r}_3^\T\cdot(z_k^\mathrm{c}\bm{d}_k-\bm{t})=z_0,
\end{equation}
where the $z_k^\mathrm{c}$ is solved as:
\begin{equation}
\label{eqn:k_zc}
    z_k^\mathrm{c}=\frac{z_0+\bm{r}_3^\T\cdot\bm{t}}{\bm{r}_3^\T\cdot\bm{d}_k}.
\end{equation}
Then, the back-projected point $\widehat{\bm{x}}_k$ is obtained by substituting \eqref{eqn:k_zc} into \eqref{eqn:k_c2w}:
\begin{equation}\label{eqn:bp}
    \widehat{\bm{x}}_k=\bm{R}^\T\left(\frac{z_0+\bm{r}_3^\T\cdot\bm{t}}{\bm{r}_3^\T\cdot\bm{d}_k}\bm{d}_k-\bm{t}\right).
\end{equation}

After obtaining the back-projected points, we further formulate an NLLS optimization model. Specifically, we define the \textit{algebraic distance}\cite{Guennebaud2007} from point $\bm{x}$ to Lam\'e curve $\mathcal{E}$ as:
\begin{equation}
    \varphi(\bm{x},\mathcal{E})=\big|(x-\overline{x})/{a}\big|^{\gamma} + \big|(y-\overline{y})/{b}\big|^{\gamma} -1.
\end{equation}
By definition, if $\bm{x}$ lies on the curve $\mathcal{E}$, the algebraic distance $\varphi(\bm{x},\mathcal{E})=0$; if $\bm{x}$ lies inside $\mathcal{E}$, $\varphi(\bm{x},\mathcal{E})<0$; otherwise, $\varphi(\bm{x},\mathcal{E})>0$. Moreover, $\varphi(\bm{x},\mathcal{E})$ can provide a coarse measure of the proximity between $\bm{x}$ and $\mathcal{E}$. 

In the absence of noise, the true camera extrinsics would cause all back-projected points $\widehat{\bm{x}}_k$ to lie exactly on their corresponding LED curves $\mathcal{E}_i$, i.e., $\varphi(\widehat{\bm{x}}_k,\mathcal{E}_i)=0,\;\forall\, k'\in\mathcal{I}_i$. In practice, due to image noise and other interferences\cite{vpca,pan2025VLP,vovlp}, the optimal extrinsics $\bm{R}^*$ and $\bm{t}^*$ can be obtained by minimizing the sum of squared algebraic distances from all back-projected points to their corresponding LED curves according to the LS principle. Therefore, we formulate the following NLLS optimization problem:
\begin{subequations}
\begin{gather}
    \mathop{\arg\min}_{\bm{R},\bm{t}} \sum_{i=1}^N 
    \sum_{k'\in\mathcal{I}_i} \varphi^2\big(\widehat{\bm{x}}_k,\mathcal{E}_i\big) \label{eqn:ob}\tag{\theequation}\\
    \mathrm{s.t.}~~\bm{R}\in\mathfrak{SO}(3),\;\bm{t}\in\mathbb{R}^{3},-\bm{R}^\T\bm{t}\in\mathcal{U}, \label{eqn:ob_2}\\
    \frac{1}{|\mathcal{M}|}\sum_{m'\in\mathcal{M}}\big\|\widehat{\bm{x}}_m-\bm{x}_m\big\|^2 \leq \epsilon^2. \label{eqn:ob_3}
\end{gather}
\end{subequations}
Compared to \eqref{eqn:prob1}, in \eqref{eqn:ob_3} we additionally consider the presence of pre-calibrated RPs on the ceiling. {Let $\mathcal{M}$ denote the set of their projected points on the image plane. These projected points $m'$ are also back-projected onto the ceiling plane, and the mean squared back-projection error to their corresponding WCS points $\bm{x}_m$ is constrained to be smaller than $\epsilon^2$. Incorporating this constraint can help eliminate potential ambiguous solutions\cite{vpca} and improve optimization stability during the refinement process.}

In Problem~\eqref{eqn:ob}, rotation matrix $\bm{R}\in\mathfrak{SO}(3)$ contains 9 parameters but has only 3 DoFs. To simplify the optimization, we parameterize $\bm{R}$ using the Rodrigues vector\cite{Valdenebro2016}. Let the Rodrigues vector of $\bm{R}$ be denoted by $\bm{\omega}=\omega\bm{n}$, where $\bm{n}$ is a unit vector. Then, by the \textit{Rodrigues' Formula}\cite{Valdenebro2016}, the rotation matrix can be expressed as:
\begin{equation}
    \bm{R}=\bm{I}+\lfloor\bm{\bm{n}}\rfloor_{\times}\sin\omega+(\lfloor\bm{\bm{n}}\rfloor_{\times})^2(1-\cos\omega),
\end{equation}
where $\lfloor\bm{\bm{n}}\rfloor_{\times}=\left[\begin{smallmatrix}0&-n_z&n_y\\n_z&0&-n_x\\-n_y&n_x&0\end{smallmatrix}\right]$ denotes the skew-symmetric matrix associated with $\bm{n}$. Substituting $\bm{\omega}$ into \eqref{eqn:ob} yields:
\begin{subequations}
\begin{gather}
    \mathop{\arg\min}_{\bm{\omega},\bm{t}} \sum_{i=1}^N 
    \sum_{k'\in\mathcal{I}_i} \varphi^2\big(\widehat{\bm{x}}_k,\mathcal{E}_i\big) \tag{\theequation} \label{eqn:ob2}\\
    \mathrm{s.t.}~~\bm{\omega},\bm{t}\in\mathbb{R}^{3},\;
    -\bm{R}^\T\bm{t}\in\mathcal{U}, \label{eqn:ob2-c1}\\
    \sum_{m'\in\mathcal{M}}\big\|\widehat{\bm{x}}_m-\bm{x}_m\big\|^2 \leq |\mathcal{M}|\cdot\epsilon^2, \label{eqn:ob2-c2}
\end{gather}
\end{subequations}
Then, Problem~\eqref{eqn:ob2} can be solved by using nonlinear optimization algorithms, such as the sequential quadratic programming (SQP) methods\cite{Gill2005}. In this way, the optimal camera extrinsics, $\bm{R}^*$ and $\bm{t}^*$, have been obtained simultaneously. {Note that since Problem~\eqref{eqn:ob2} is inherently non-convex due to the nonlinear perspective projection model and the coupling between $\bm{\omega}$ and $\bm{t}$, the SQP optimization mainly serves as a local refinement stage initialized by Free\pnp.}

To reduce the computational complexity, $\forall\,1\leq i\leq N$, the projected contour point set $\mathcal{I}_i$ can be approximately uniformly subsampled w.r.t. arc length to obtain $\mathcal{I}_i^{(\varsigma)}$, and in this case, only the points in $\mathcal{I}_i^{(\varsigma)}$ are used for back-projection and NLLS optimization. Let the sampling ratio be $0<\varsigma\leq1$, meaning that one point is sampled from $\mathcal{I}_i$ every $\varsigma^{-1}$ points. With this subsampling strategy, the LC-VLP algorithm only requires replacing $\mathcal{I}_i$ in \eqref{eqn:ob} and \eqref{eqn:ob2} with $\mathcal{I}_i^{(\varsigma)}$. Note that as $\varsigma\rightarrow 0$, the number of optimization points decreases, which may lead to reduced optimization improvement, while the computational latency can be correspondingly reduced.

\subsection{Complexity Analysis}

We evaluate the computational overhead of the proposed LC-VLP algorithm by analyzing its theoretical time complexity. In the \freepnp\ stage, for each captured LED, both \eqref{eqn:p_i'} and the reordering of $\mathcal{I}_i$ require traversing all points in $\mathcal{I}_i$, resulting in a complexity of $\mathcal{O}(|\mathcal{I}_i|)$ for each operation. \eqref{eqn:2dind} and \eqref{eqn:3dset} sample $M$ points from $\mathcal{I}_i$ and $\mathcal{E}_i$ to obtain $\mathcal{I}^{\flat}_i$ and $\mathcal{E}^{\flat}_i$, respectively, each with a complexity of $\mathcal{O}(M)$. The above procedures are applied to all $N$ captured LEDs, leading to an complexity of $\mathcal{O}(2\sum_{i=1}^N|\mathcal{I}_i|+2MN)$ for constructing all approximate 3D--2D correspondences. Finally, \eqref{eqn:pnp} applies a \pnp\ solver to the $MN$ 3D--2D correspondences, with a complexity of $\mathcal{O}(MN)$. Overall, the time complexity of \freepnp\ is $\mathcal{O}(2\sum_{i=1}^N|\mathcal{I}_i|+2MN)+\mathcal{O}(MN)=\mathcal{O}(\sum_{i=1}^N|\mathcal{I}_i|+MN)$.

In the refinement stage, the SQP method is employed to solve Problem~\eqref{eqn:ob2}. In each iteration, a total of $\sum_{i=1}^N|\mathcal{I}_i^{(\varsigma)}|=\varsigma\sum_{i=1}^N|\mathcal{I}_i|$ points are back-projected according to \eqref{eqn:bp}, resulting in a per-iteration complexity of $\mathcal{O}(\varsigma\sum_{i=1}^N|\mathcal{I}_i|)$. Moreover, since the dimension of the optimization objective $[\bm{\omega}^\T,\bm{t}^\T]$ and the total number of constraints in \eqref{eqn:ob2-c1} and \eqref{eqn:ob2-c2} are both small constants, the per-iteration complexity of SQP itself can be regarded as a constant\cite{Aoyama2025}. Therefore, the overall time complexity of solving \eqref{eqn:ob2} is $\mathcal{O}(\varsigma I\sum_{i=1}^N|\mathcal{I}_i|)$, where $I$ denotes the number of iterations.

In conclusion, the time complexity of the proposed LC-VLP is expressed as $\mathcal{O}(\sum_{i=1}^N|\mathcal{I}_i|+MN)+\mathcal{O}(\varsigma I\sum_{i=1}^N|\mathcal{I}_i|)=\mathcal{O}(\varsigma I\sum_{i=1}^N|\mathcal{I}_i|)$, where $M<|\mathcal{I}_i|\ll \varsigma I|\mathcal{I}_i|$.

\section{Simulations and Experiments}\label{sec:simexp}

In this section, we verify the performance and feasibility of our proposed LC-VLP algorithm via simulations and experiments. To quantitatively evaluate the positioning performance of LC-VLP and baselines, we respectively define the position error $\varepsilon_p$ and rotation error $\varepsilon_r$ as:
\begin{equation}
    \varepsilon_{p} = \big\|\widehat{\bm{x}}-\bm{x}\big\|,
\end{equation}
\begin{equation}
    \varepsilon_{r} = \tfrac{180}{\pi}\arccos \tfrac{1}{2}\big({\mathrm{tr}(\widehat{\bm{R}}\bm{R}^\T)-1}\big),
\end{equation}
where $\bm{x}$ and $\widehat{\bm{x}}$ denote the true and estimated positions, respectively, and $\bm{R}$ and $\widehat{\bm{R}}$ denote the corresponding rotation matrices. Based on these, we define the following metrics: (i) Mean of position/rotation errors (MPE/MRE); (ii) 50\textsuperscript{th}/90\textsuperscript{th} percentile of position/rotation errors (P50/P90/R50/R90); and (iii) standard deviation (STD) of position errors.

The following algorithms are chosen as baselines: 
\begin{itemize}
  \item \textbf{V-PCA\cite{vpca}}, a SoTA algorithm based on circular LEDs. This method estimates the camera extrinsics and position using a purely geometric method, and requires at least two circular LEDs to be captured simultaneously.
  \item \textbf{VLC-\pnp\cite{vlcpnp}}, a SoTA algorithm based on rectangular LEDs. This method can estimate the camera pose using only a single LED; however, it relies on an additional magnetometer to assist in establishing the correct 3D--2D correspondences. In the simulations, we assume that there is no electromagnetic interference and the magnetometer measurements are perfectly accurate.
  \item \textbf{O\pnp\cite{opnp}}, an advanced high-accuracy \pnp\ solver. In the simulations, to enable the implementation of this method, four extra RPs are placed at the endpoints of the major and minor axes of each LED, respectively. This additional setup is not used by other baselines.
  \item {\textbf{ISA-SVG\cite{isasvg}}, an advanced IMU-assisted algorithm. This method requires two LEDs for 3D positioning and treats the LEDs as point sources, making it independent of the LED shape. However, it requires prior pose information provided by an IMU and is incapable of estimating the camera orientation. Therefore, the rotation error of ISA-SVG is not included in the results.}
\end{itemize}

{We note that to ensure a fair comparison, in all the following simulations and experiments we consistently guarantee that at least two LEDs are fully captured by the camera.}

\subsection{Simulation Setup and Results}

\begin{table}[!t]
\small
\centering
\caption{{Simulation Parameters}}
\label{tab:sim_para}
\setlength{\extrarowheight}{2pt}
\setlength{\tabcolsep}{5pt}{
\scalebox{0.9}{
\begin{tabular}{|cc|P{4.5cm}|} \hline \rowcolor[HTML]{EFEFEF} 
\multicolumn{2}{|c|}{\textbf{Parameter}} & \textbf{Value} \\\hline
\multicolumn{1}{|c|}{\multirow{5}{*}{Platform}} & Room size & $6\m\times8\m\times3\m$ \\\cline{2-3} 
\multicolumn{1}{|c|}{} & LED semi-angle & $\Phi_{1/2}=60^\circ$ \\\cline{2-3} 
\multicolumn{1}{|c|}{} & LED positions & $(2,2,3)\m$, $(2,6,3)\m$, $(4,2,3)\m$, $(4,6,3)\m$ \\\cline{2-3} 
\multicolumn{1}{|c|}{} & LED power & $P_t=3\,\mathrm{W}$ \\\hline
\multicolumn{1}{|c|}{\multirow{3}{*}{Camera}} & Focal lengths & $f_x=f_y=800\,\mathrm{px}$ \\\cline{2-3} 
\multicolumn{1}{|c|}{} & Principle point & $(u_0,u_0)=(320,240)\,\mathrm{px}$ \\\cline{2-3} 
\multicolumn{1}{|c|}{} & Pixel physical size & $\delta x/\delta u=\delta y/\delta v=1.25\,\mathrm{\mu m/px}$ \\\hline
\multicolumn{1}{|c|}{LC-VLP} & Sampling ratio & $\varsigma=1$ \\\hline
\end{tabular}
}}
\end{table}

In the simulations, we consider a cuboid room in which four LEDs are installed at fixed positions on the ceiling. A camera with randomly generated position and orientation is used to capture the LEDs. To ensure the proper execution of positioning algorithms, the generated camera pose is always constrained such that at least two LEDs are visible. Unless otherwise specified, key system parameters are summarized in Table~\ref{tab:sim_para}. {The camera image noise is modeled as a zero-mean white Gaussian noise\cite{vpca,bai2021a,vp4l,vovlp,isasvg,He2024VLP,pan2025VLP,epnp,opnp} with an STD of $2\,\mathrm{px}$, which is widely adopted in both fields of VLP and CV. For the IMU used in the ISA-SVG algorithm, the pitch and roll noises are modeled as uniformly distributed random variables following $\mathcal{U}[-1.5^\circ,1.5^\circ]$, while the yaw noise is modeled as $\mathcal{U}[-15^\circ,15^\circ]$\cite{Hao2019,vpca,vovlp}.} All statistical results in the simulations are averaged over $10^4$ independent iterations to ensure universality. {Key parameter settings of the SQP method are: max\_iterations = 300, step\_tolerance = 1e-8, optimality\_tolerance = 1e-8, and max\_function\_evaluations = 5e3.}

\begin{table*}[!t]
\small
\centering
\caption{{Simulation Statistical Results}}
\label{tab:sim_res}
\setlength{\extrarowheight}{2pt}
\setlength{\tabcolsep}{5pt}{
\scalebox{0.9}{
\begin{tabular}{|P{2.7cm}|P{2.4cm}|c|c|c|c|c|c|c|}
\hline \rowcolor[HTML]{EFEFEF}
\textbf{Scenario} & \textbf{Algorithm} & \textbf{MPE [cm] $\downarrow$} & \textbf{P50 [cm] $\downarrow$} & \textbf{P90 [cm] $\downarrow$} & \textbf{STD [cm] $\downarrow$} & \textbf{MRE [$^\circ$] $\downarrow$} & \textbf{R50 [$^\circ$] $\downarrow$} & \textbf{R90 [$^\circ$] $\downarrow$} \\\hline

& \textbf{LC-VLP (Ours)} & \textbf{2.25} & \textbf{1.76} & \textbf{4.70} & \textbf{1.91} & \textbf{0.28} & \textbf{0.23} & \textbf{0.59} \\\cline{2-9}
& V-PCA\cite{vpca} & 3.93 & 3.39 & 7.38 & 2.54 & 0.42 & 0.38 & 0.75 \\\cline{2-9}
& O\pnp\cite{opnp} & 5.21 & 3.50 & 10.82 & 8.79 & 0.65 & 0.50 & 1.36 \\\cline{2-9}
\multirow{-4}{*}{\shortstack{Scenario~\ref{sc:1}\\(Circular LEDs)}} & ISA-SVG\cite{isasvg} & 19.25 & 16.48 & 34.65 & 12.45 & N/A & N/A & N/A \\\hline

& \textbf{LC-VLP (Ours)} & \textbf{2.71} & \textbf{2.05} & \textbf{5.20} & \textbf{2.80} & \textbf{0.33} & \textbf{0.26} & \textbf{0.66} \\\cline{2-9}
& VLC-\pnp\cite{vlcpnp} & 5.34 & 3.85 & 11.78 & 4.82 & 1.48 & 1.15 & 3.04 \\\cline{2-9}
& O\pnp\cite{opnp} & 6.08 & 4.07 & 14.51 & 9.43 & 0.75 & 0.58 & 1.62 \\\cline{2-9}
\multirow{-4}{*}{\shortstack{Scenario~\ref{sc:2}\\(Rectangular LEDs)}} & ISA-SVG\cite{isasvg} & 19.16 & 16.46 & 34.21 & 12.77 & N/A & N/A & N/A \\\hline
\end{tabular}
}}
\end{table*}

\begin{figure}[!t]
\begin{subfigure}{0.235\textwidth}\centering
    \includegraphics[width=\textwidth]{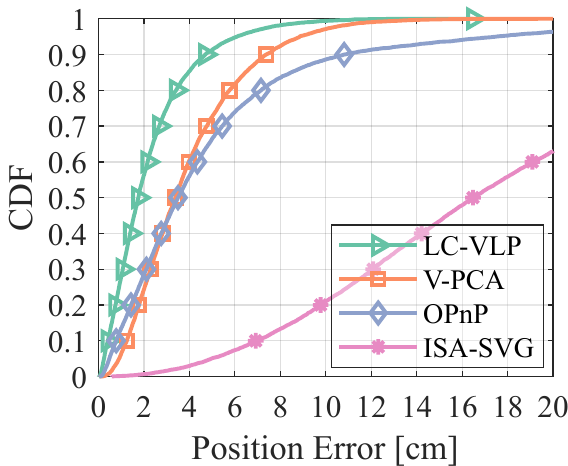}
    \caption{}\label{fig:sim_c_pos}
\end{subfigure}
\begin{subfigure}{0.235\textwidth}\centering
    \includegraphics[width=\textwidth]{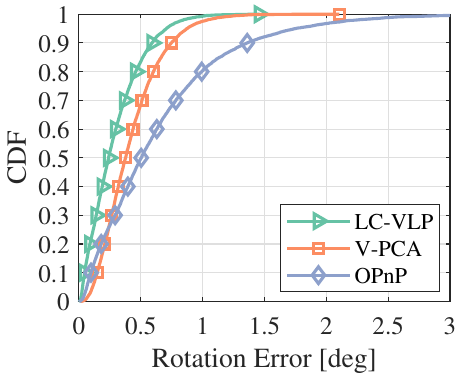}
    \caption{}\label{fig:sim_c_rot}
\end{subfigure}
\begin{subfigure}{0.235\textwidth}\centering
    \includegraphics[width=\textwidth]{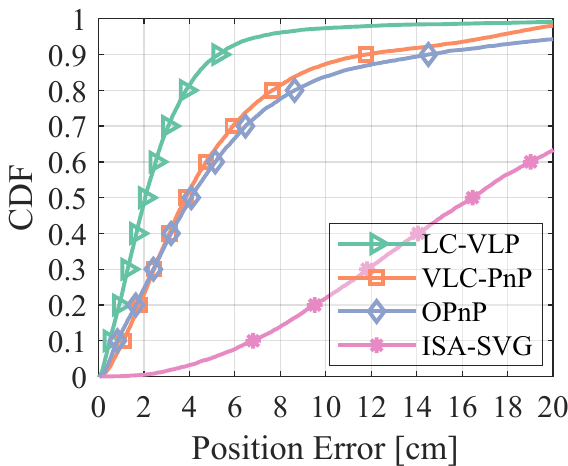}
    \caption{}\label{fig:sim_r_pos}
\end{subfigure}
\begin{subfigure}{0.235\textwidth}\centering
    \includegraphics[width=\textwidth]{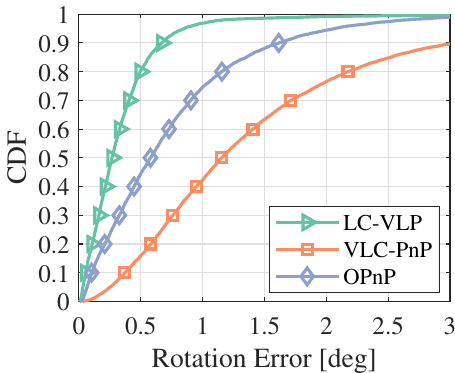}
    \caption{}\label{fig:sim_r_rot}
\end{subfigure}
\centering
\caption{{The simulation CDFs of position and rotation errors. (a) and (b) are the results of Scenario~\ref{sc:1}, while (c) and (d) are the results of Scenario~\ref{sc:2}.}}
\label{fig:sim_cdf}
\end{figure}

\subsubsection{Homogeneous LED-Shape Scenarios} \label{sec:hLED}
We first consider scenarios where all LEDs share the same shape, with circular and rectangular LEDs taken as representative examples:
\begin{enumerate}[label=\textbullet,ref=\Alph*]
    \item \textit{Scenario~\Alph{enumi}:} Circular LEDs with a radius of $0.15\cm$. \label{sc:1}
    \item \textit{Scenario~\Alph{enumi}:} Rectangular LEDs with a semi-major axis of $0.15\cm$ and a semi-minor axis of $0.12\cm$. \label{sc:2}
\end{enumerate}
{Note that, in these two scenarios, we primarily focus on comparing LC-VLP with V-PCA and VLC-\pnp\ under their corresponding LED-shape settings. This is because, even if V-PCA and VLC-\pnp\ were applied to other LED-shape scenarios, their performance would degrade significantly due to the mismatch between the assumed and actual LED geometries. This will be further analyzed in Scenario~\ref{sc:4}.}

Fig.~\ref{fig:sim_cdf} shows the cumulative distribution functions (CDFs) of the position and rotation errors for different algorithms. It can be observed that, in both scenarios, LC-VLP consistently outperforms all baseline methods. Specifically, in Scenario~\ref{sc:1}, LC-VLP achieves a 95\textsuperscript{th}-percentile position accuracy of $6\cm$ and an 83\textsuperscript{rd}-percentile rotation accuracy of $0.5^\circ$, whereas V-PCA only attains 82\textsuperscript{nd}- and 70\textsuperscript{th}-percentile accuracies, respectively. This performance gap arises because V-PCA is a purely geometric solution, whose accuracy and stability are inherently lower than those of the iterative optimization-based LC-VLP. {By comparison, O\pnp\ and ISA-SVG exhibit larger errors, since O\pnp\ does not exploit the fixed shape constraints among the RPs, and ISA-SVG relies on IMU which introduces additional pose errors.}

{As shown in Figs.~\ref{fig:sim_cdf}\subref{fig:sim_r_pos} and \subref{fig:sim_r_rot}, in Scenario~\ref{sc:2}, LC-VLP also achieves 93\textsuperscript{rd}- and 80\textsuperscript{th}-percentile position and rotation accuracies of $6\cm$ and $0.5^\circ$, respectively, which significantly outperforms VLC-\pnp, O\pnp\ and ISA-SVG.} Note that although both VLC-\pnp\ and O\pnp\ are \pnp-based methods, their performances differ, since O\pnp\ is optimized for coplanar RPs, whereas VLC-\pnp\ adopts a conventional \pnp\ solver. When all RPs are coplanar, the latter is prone to accuracy degradation caused by depth ambiguity\cite{opnp}. Therefore, although VLC-\pnp\ applies nonlinear refinement, it remains more susceptible to large rotation errors. In contrast, LC-VLP incorporates not only the corner points but also the boundary points along LED edges into calculation. Therefore, compared with \pnp-based methods, this richer geometric constraint significantly improves both its accuracy and robustness.

\begin{figure}[!t]\centering
    \includegraphics[width=0.34\textwidth]{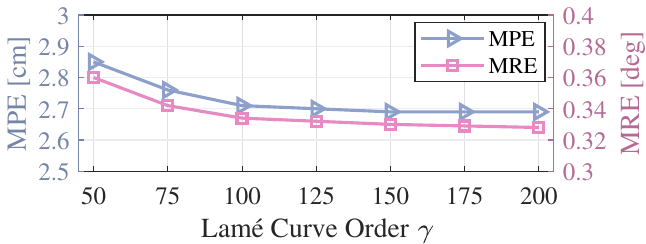}
    \caption{{Comparison of MPE and MRE of LC-VLP under different Lam\'e curve orders $\gamma$ in Scenario~\ref{sc:2}.}}
    \label{fig:order}
\end{figure}

The above statistical results are summarized in Table~\ref{tab:sim_res}. It can be observed that LC-VLP achieves MPEs of $2.25\cm$ and $2.71\cm$ in Scenarios~\ref{sc:1} and \ref{sc:2}, respectively, corresponding to reductions of $43\%$ and $49\%$ compared with V-PCA and VLC-\pnp. Likewise, in both scenarios, the MRE of LC-VLP is reduced by $25\%$ and $78\%$, respectively, relative to V-PCA and VLC-\pnp. In terms of positioning stability, LC-VLP consistently yields the lowest STD, indicating that it maintains a more balanced accuracy across different spatial locations and camera poses, thereby demonstrating superior robustness. Moreover, other metrics reported in Table~\ref{tab:sim_res} further confirm that LC-VLP outperforms other baselines in all aspects, exhibiting high accuracy and stability in both position and orientation estimation.

{In Scenario~\ref{sc:2}, we consistently use $\gamma = 100$ to approximate rectangular LEDs with Lam\'e curves. In Fig.~\ref{fig:order}, we further illustrate the impact of different curve orders $\gamma$ on the performance of LC-VLP. It can be observed that, as $\gamma$ increases, both MPE and MRE of LC-VLP gradually decrease and eventually converge. This is because a larger order makes the Lam\'e curve closer to the actual rectangle, thereby improving the accuracy in both the FreeP\textit{n}P and nonlinear optimization stages. Moreover, when $\gamma > 100$, the Lam\'e curve is already sufficiently close to a rectangle, and thus further increasing $\gamma$ brings only marginal improvement. To avoid floating-point precision loss caused by excessively large exponents, in practical applications, choosing $\gamma\geq100$ is generally sufficient.}

\begin{figure*}[!t]
\begin{subfigure}{0.235\textwidth}\centering
    \includegraphics[width=\textwidth]{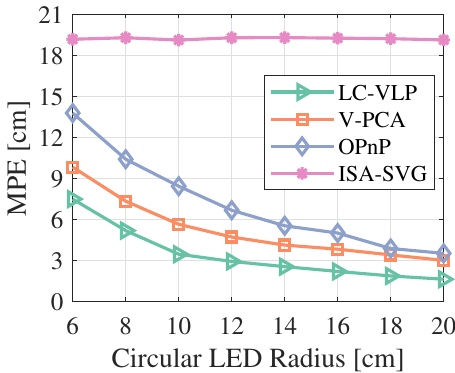}
    \caption{}\label{fig:sim_rr_c}
\end{subfigure}
\begin{subfigure}{0.235\textwidth}\centering
    \includegraphics[width=\textwidth]{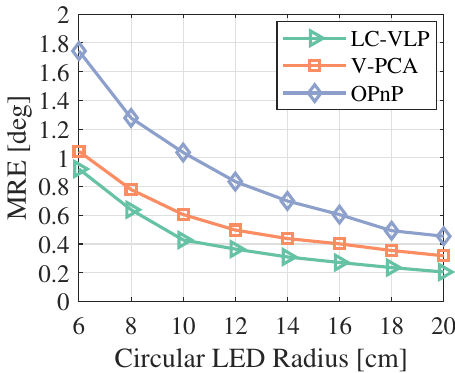}
    \caption{}\label{fig:sim_rr_c_rot}
\end{subfigure}
\begin{subfigure}{0.235\textwidth}\centering
    \includegraphics[width=\textwidth]{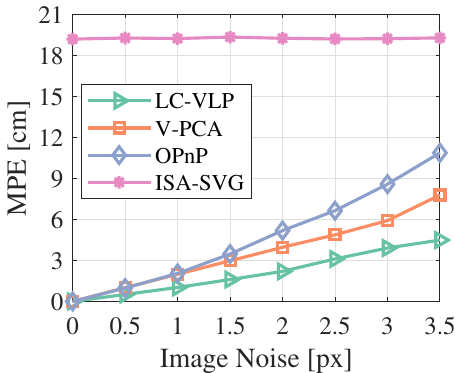}
    \caption{}\label{fig:sim_nn_c}
\end{subfigure}
\begin{subfigure}{0.235\textwidth}\centering
    \includegraphics[width=\textwidth]{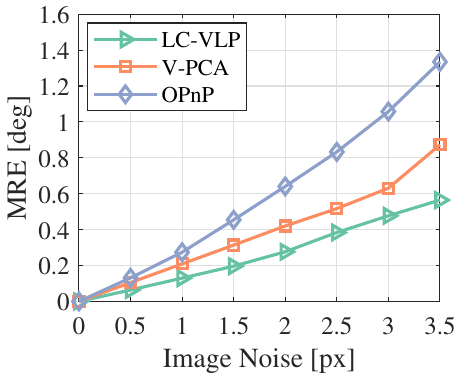}
    \caption{}\label{fig:sim_nn_c_rot}
\end{subfigure}
\begin{subfigure}{0.235\textwidth}\centering
    \includegraphics[width=\textwidth]{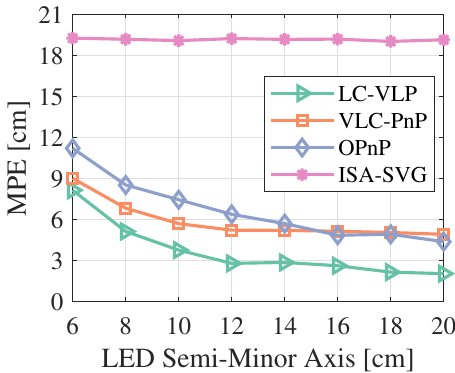}
    \caption{}\label{fig:sim_rr_r}
\end{subfigure}
\begin{subfigure}{0.235\textwidth}\centering
    \includegraphics[width=\textwidth]{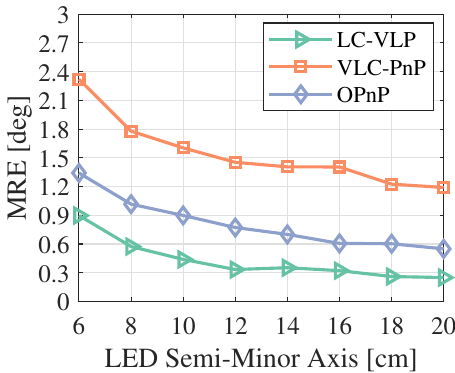}
    \caption{}\label{fig:sim_rr_r_rot}
\end{subfigure}
\begin{subfigure}{0.235\textwidth}\centering
    \includegraphics[width=\textwidth]{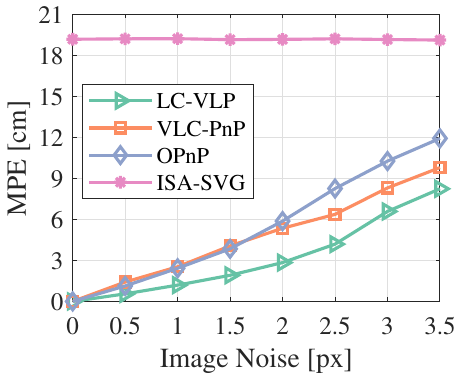}
    \caption{}\label{fig:sim_nn_r}
\end{subfigure}
\begin{subfigure}{0.235\textwidth}\centering
    \includegraphics[width=\textwidth]{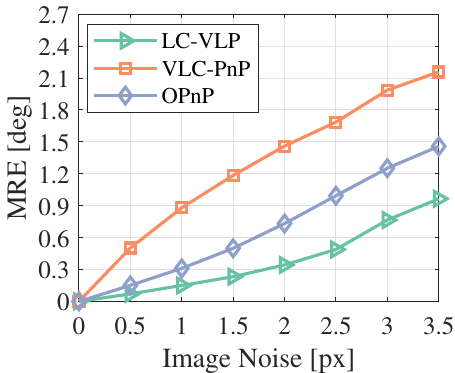}
    \caption{}\label{fig:sim_nn_r_rot}
\end{subfigure}
\centering
\caption{{The simulation results across different LED sizes and image noise STDs. (a)--(d) correspond to Scenario~\ref{sc:1}: (a) and (b) show the MPE and MRE versus the LED radius, while (c) and (d) show the MPE and MRE versus the image noise. (e)--(h) correspond to Scenario~\ref{sc:2}: (e) and (f) show the MPE and MRE versus the LED semi-minor axis, while (g) and (h) show the MPE and MRE versus the image noise.}}
\label{fig:sim}
\end{figure*}

In Fig.~\ref{fig:sim}, we investigate the impact of LED size and image noise on positioning performance. Note that when varying the LED size, the circular LED radius is no longer fixed in Scenario~\ref{sc:1}, while in Scenario~\ref{sc:2} the rectangular LED’s major axis is fixed and the minor axis is treated as the variable. We can observe that, as the LED size increases, both position and rotation estimation errors of all algorithms consistently decrease; conversely, increasing image noise leads to higher errors for all methods. In particular, in Figs.~\ref{fig:sim}\subref{fig:sim_rr_c} and \subref{fig:sim_rr_c_rot}, as the radius of the circular LED varies, the MPE of LC-VLP ranges from $7.48\cm$ to $1.62\cm$. By comparison, the 2\textsuperscript{nd}-best method, V-PCA, achieves an MPE ranging from $9.82\cm$ to $3.00\cm$. In terms of rotation accuracy, LC-VLP attains an MRE between $0.92^\circ$ and $0.20^\circ$, whereas the MRE of V-PCA varies from $1.00^\circ$ to $0.32^\circ$. In Figs.~\ref{fig:sim}\subref{fig:sim_nn_c} and \subref{fig:sim_nn_c_rot}, as the image noise increases, the MPE of LC-VLP gradually rises to $4.48\cm$ and the MRE increases to $0.56^\circ$. In contrast, V-PCA and O\pnp\ are more sensitive to image noise, exhibiting noticeably larger increases in both position and rotation errors.

Figs.~\ref{fig:sim}\subref{fig:sim_rr_r} and \subref{fig:sim_rr_r_rot} illustrate the variations of MPE and MRE with respect to the LED semi-minor axis in Scenario~\ref{sc:2}. As the semi-minor axis increases, the MPE of LC-VLP decreases from $8.13\cm$ to $2.03\cm$, and its MRE decreases from $0.90^\circ$ to $0.25^\circ$, which consistently achieve the best performance across different LED sizes. From Figs.~\ref{fig:sim}\subref{fig:sim_nn_r} and \subref{fig:sim_nn_r_rot}, it can be observed that in Scenario~\ref{sc:2}, as the image noise increases, the MPE of LC-VLP rises from $0$ to $8.23\cm$ and the MRE increases from $0$ to $0.96^\circ$. Nevertheless, LC-VLP exhibits the smallest error growth among all baselines. In contrast, although VLC-\pnp\ achieves the 2\textsuperscript{nd}-best position accuracy, its rotation error is higher than that of O\pnp. Overall, both VLC-\pnp\ and O\pnp\ are consistently outperformed by LC-VLP.

{Moreover, Figs.~\ref{fig:sim}\subref{fig:sim_rr_c}, \subref{fig:sim_nn_c}, \subref{fig:sim_rr_r}, and \subref{fig:sim_nn_r} also show that the performance of ISA-SVG varies only marginally with the LED size and image noise, remaining around $19\cm$ in most cases. This is because ISA-SVG treats LEDs as point sources and ignores their geometric features; additionally, compared with image noise, the orientation noise introduced by the IMU is significantly more dominant and becomes the primary source of position error. In general, LC-VLP also significantly outperforms ISA-SVG.}

\begin{figure}[!t]\centering
    \includegraphics[width=0.27\textwidth]{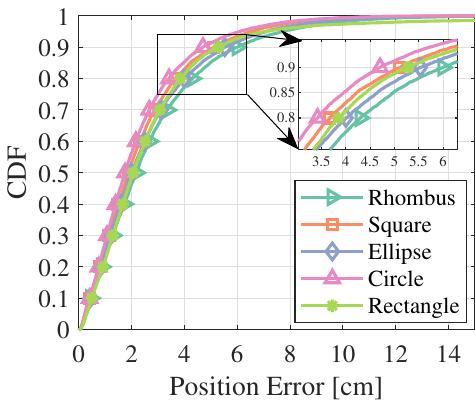}
    \caption{The simulation results of LC-VLP across different LED shapes in Scenario~\ref{sc:3}.}
    \label{fig:cdfs}
\end{figure}

\begin{figure}[!t]
\begin{subfigure}{0.235\textwidth}\centering
    \includegraphics[width=\textwidth]{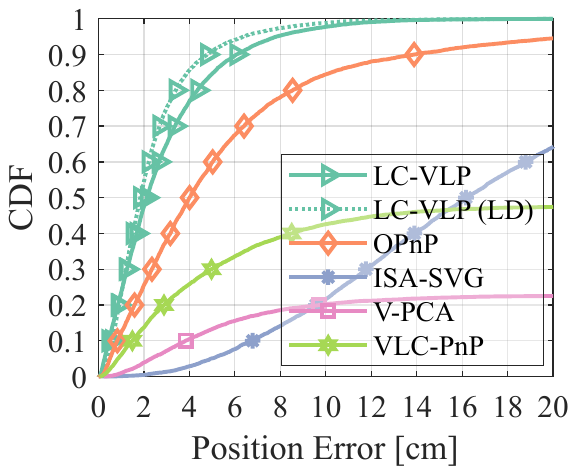}
    \caption{}\label{fig:cdfs_hh_pos}
\end{subfigure}
\begin{subfigure}{0.235\textwidth}\centering
    \includegraphics[width=\textwidth]{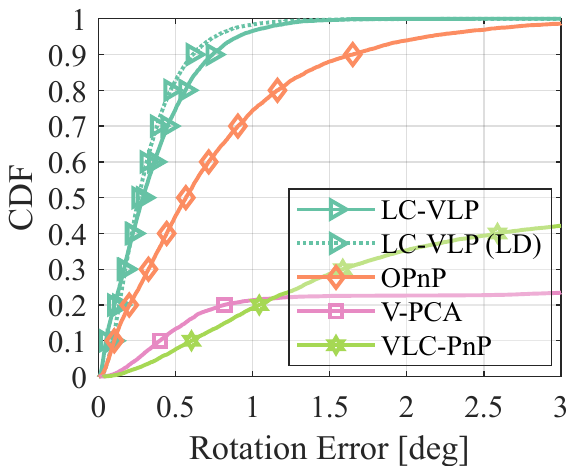}
    \caption{}\label{fig:cdfs_hh_rot}
\end{subfigure}
\centering
\caption{{The simulation results in Scenario~\ref{sc:4}.}}
\label{fig:cdfs_hh}
\end{figure}

We further consider the following scenario where LEDs of other common shapes are deployed:
\begin{enumerate}[label=\textbullet,ref=\Alph*,start=3]
    \item \textit{Scenario~\Alph{enumi}:} Lam\'e curve LEDs with $\gamma=1$ (rhombus or square) or $\gamma=2$ (ellipse). Note that, in the square case, the LED’s major and minor axes are identical to those in Scenario~\ref{sc:1}; whereas in the rhombus or ellipse cases, the major and minor axes are identical to Scenario~\ref{sc:2}. \label{sc:3}
\end{enumerate}

In Fig.~\ref{fig:cdfs} we use CDFs to illustrate the impact of different LED shapes on the performance of LC-VLP under Scenario~\ref{sc:3}. As shown, for rhombic, square, elliptical, circular, and rectangular LEDs, LC-VLP achieves MPEs of $2.80\cm$, $2.41\cm$, $2.60\cm$, $2.23\cm$, and $2.91\cm$, respectively. These results indicate that, across Lam\'e curve-shaped LEDs with different $\gamma$'s, the MPE of LC-VLP consistently remains below $3\cm$, and the performance variation among different LED shapes is relatively small. We note that square and circular LEDs yield higher accuracy than the other shapes, which is attributable to their larger areas in Scenario~\ref{sc:3}. This observation is consistent with the results in Figs.~\ref{fig:sim}\subref{fig:sim_rr_c}, \subref{fig:sim_rr_c_rot}, \subref{fig:sim_rr_r}, and \subref{fig:sim_rr_r_rot}, where larger LED areas are shown to improve the accuracy of LC-VLP.

\subsubsection{Heterogeneous LED-Shape Scenarios}

In Section~\ref{sec:hLED}, we simulated and compared the performance of different algorithms under homogeneous LED-shape scenarios. However, unlike V-PCA and VLC-\pnp, the proposed LC-VLP is capable of achieving generic positioning when multiple LED shapes coexist in the environment. To evaluate this capability, we consider a complex scenario with heterogeneous LED layouts:
\begin{enumerate}[label=\textbullet,ref=\Alph*,start=4]
    \item \textit{Scenario~\Alph{enumi}:} Four LEDs with different shapes (rhombus, ellipse, circle, and rectangle as defined in Scenarios~\ref{sc:1}--\ref{sc:3}) are deployed at the four LED positions listed in Table~\ref{tab:sim_para}. \label{sc:4}
\end{enumerate}

{Figs.~\ref{fig:cdfs_hh}\subref{fig:cdfs_hh_pos} and \subref{fig:cdfs_hh_rot} compare the position and rotation errors of LC-VLP, O\pnp, ISA-SVG, V-PCA and VLC-\pnp\ in Scenario~\ref{sc:4}.} As shown, LC-VLP achieves a 90\textsuperscript{th}-percentile position accuracy of $6\cm$ and a 75\textsuperscript{th}-percentile rotation accuracy of $0.5^\circ$. The MPE and MRE of LC-VLP are $2.85\cm$ and $0.35^\circ$, respectively, which outperforms O\pnp\ by over $53\%$. {We also apply V-PCA and VLC-\pnp\ for positioning. Note that, for V-PCA, we assume the required circular LED parameters are fixed as those of the circular LED in Scenario~\ref{sc:1}; for VLC-\pnp\, since it relies on extracting the vertices of rectangular LEDs, while circular and elliptical LEDs do not possess such vertices, positioning is regarded as failed whenever no rectangular LED is captured (i.e., position error is treated as $\infty$ and rotation error as $180^\circ$). It can be observed that the performance of both V-PCA and VLC-\pnp\ degrades significantly compared with that in Scenarios~\ref{sc:1} and \ref{sc:2}. In particular, the MPE of V-PCA exceeds $3\m$, while VLC-\pnp\ exhibits a failure rate of approximately $50\%$. This is because the scenario contains light sources that are inconsistent with the assumptions of V-PCA and VLC-\pnp, both of which are designed only for a single specific LED geometry. In contrast, LC-VLP still exhibit superior positioning capability in heterogeneous LED-shape scenarios.}

{In Scenario~\ref{sc:4}, we also considered a lower-density (LD) LED deployment. Specifically, the LEDs located at $(2,2,3)\m$ and $(2,6,3)\m$ were removed. The results are shown as LC-VLP (LD) in Fig.~\ref{fig:cdfs_hh}. We can observe that neither the positioning error nor the rotation error increases, which is because the performance of LC-VLP is not directly determined by the total number of LEDs, but depends on the number and geometric shapes of the LEDs captured by the camera.}

\subsection{Experimental Implementation}

\begin{figure}[!t]
\begin{subfigure}{0.37\textwidth}
    \centering
    \includegraphics[width=\textwidth]{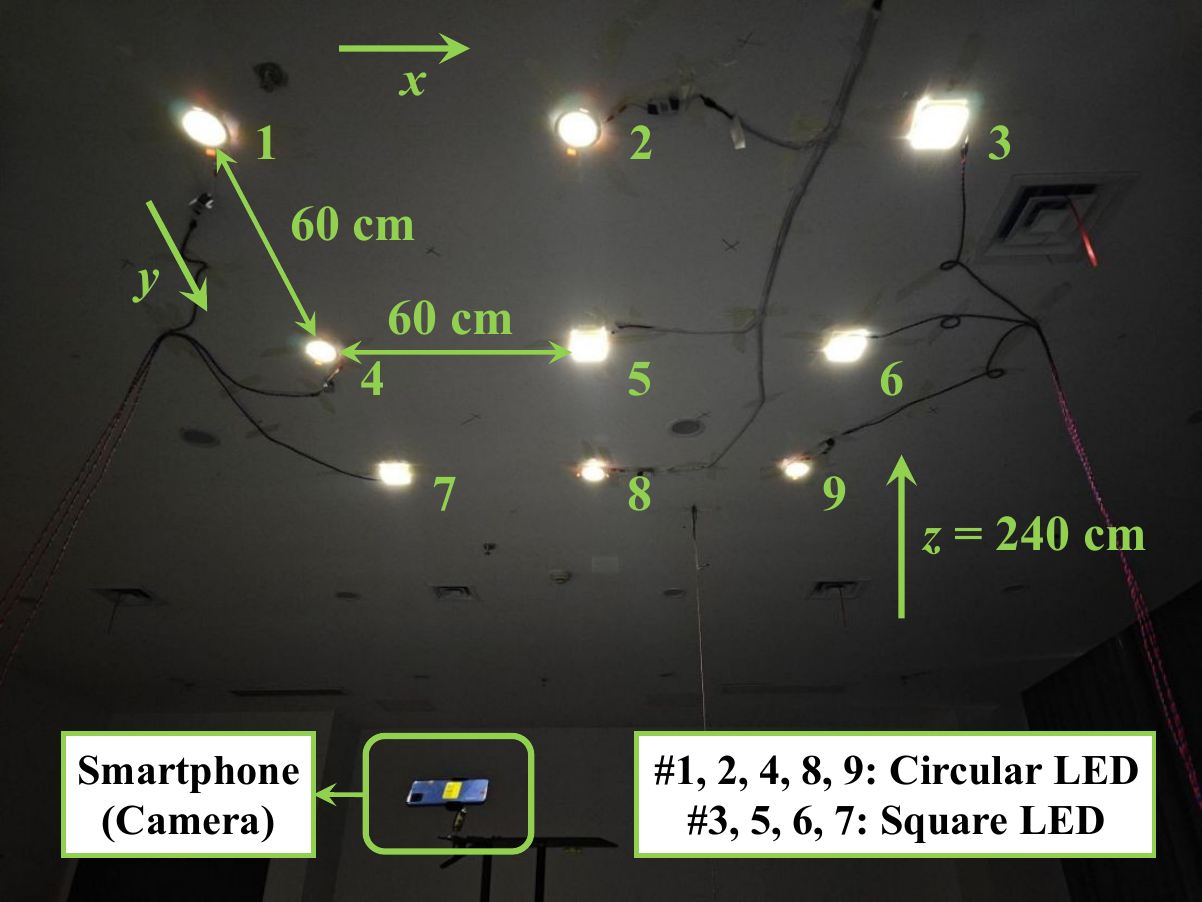}
    \caption{The prototype of LC-VLP positioning system.}\label{fig:proto}
    \vspace{1ex}
\end{subfigure}
\begin{subfigure}{0.47\textwidth}\centering
    \includegraphics[width=\textwidth]{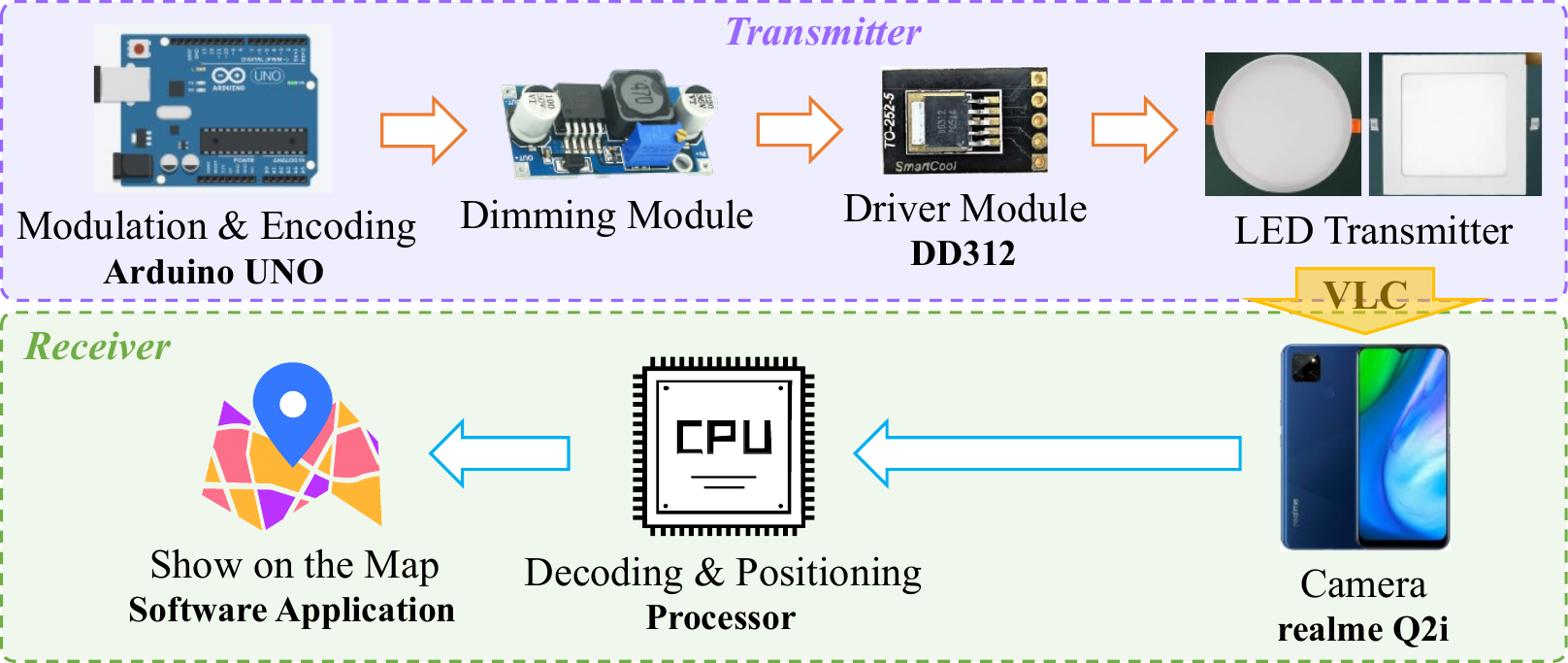}
    \caption{The system flowchart.}\label{fig:imp}
\end{subfigure}
\centering
\caption{{The experimental implementation of LC-VLP. Note that in (a), to highlight the LEDs and the smartphone, this image was captured with an aperture of f/1.6, an exposure time of 0.05\,s, and an ISO of 125 to reduce the surrounding brightness.}}
\label{fig:exp_imp}
\end{figure}

To further verify the feasibility of the proposed LC-VLP, we built an experimental platform, which is a heterogeneous LED-shape scenario. As shown in Fig.~\ref{fig:exp_imp}\subref{fig:proto}, the experiment was conducted in a cuboid space of size $2.4\m\times 2.4\m\times 2.4\m$, where a total of 9 circular (``\circLED") and square (``\rectLED") LEDs are installed on the ceiling. {The key experimental parameters are listed in Table~\ref{tab:exp_para}, with the same SQP settings as those used in the simulations. All experiments were conducted under normal indoor ambient lighting conditions.}

\begin{table}[!t]
\small
\centering
\caption{{Experimental Parameters}}
\label{tab:exp_para}
\setlength{\extrarowheight}{2pt}
\setlength{\tabcolsep}{5pt}{
\scalebox{0.9}{
\begin{tabular}{|ccP{2cm}|P{4.3cm}|}
\hline
\rowcolor[HTML]{EFEFEF} 
\multicolumn{3}{|c|}{\cellcolor[HTML]{EFEFEF} \textbf{Parameter}} & \textbf{Value} \\ \hline
\multicolumn{1}{|c|}{} & \multicolumn{1}{c|}{} & Semi-angle & $\Phi_{1/2}=67.5^\circ$ \\\cline{3-4} 
\multicolumn{1}{|c|}{} & \multicolumn{1}{c|}{} & Power & $P_t=4\,\mathrm{W}$ \\\cline{3-4} 
\multicolumn{1}{|c|}{} & \multicolumn{1}{c|}{} & Radius & $R_L=3.4\cm$ \\\cline{3-4} 
\multicolumn{1}{|c|}{} & \multicolumn{1}{c|}{\multirow{-4}{*}{\circLED}} & Positions \scalebox{0.9}{(Height $240\cm$)} & $(60,60)\cm$, $(120,60)\cm$, $(60,120)\cm$, $(120,180)\cm$, $(180,180)\cm$ \\\cline{2-4} 
\multicolumn{1}{|c|}{} & \multicolumn{1}{c|}{} & Semi-angle & $\Phi_{1/2}=67.5^\circ$ \\\cline{3-4} 
\multicolumn{1}{|c|}{} & \multicolumn{1}{c|}{} & Power & $P_t=7\,\mathrm{W}$ \\\cline{3-4} \multicolumn{1}{|c|}{} & \multicolumn{1}{c|}{} & Edge length & $D_L=8.0\cm$ \\\cline{3-4} 
\multicolumn{1}{|c|}{\multirow{-8}{*}{LED}} & \multicolumn{1}{c|}{\multirow{-4}{*}{\rectLED}} & Positions \scalebox{0.9}{(Height $240\cm$)} & $(180,60)\cm$, $(120,120)\cm$, $(180,120)\cm$, $(60,180)\cm$ \\\hline
\multicolumn{1}{|c|}{} & \multicolumn{2}{c|}{Model} & Rear camera of realme Q2i \\\cline{2-4} 
\multicolumn{1}{|c|}{} & \multicolumn{2}{c|}{} & $f=3.462\,\mathrm{mm}$ \\\cline{4-4} 
\multicolumn{1}{|c|}{} & \multicolumn{2}{c|}{} & $\delta x/\delta u=\delta y/\delta v=1.12\,\mathrm{\mu m/px}$ \\\cline{4-4} 
\multicolumn{1}{|c|}{} & \multicolumn{2}{c|}{\multirow{-3}{*}{Intrinsics}} & $(u_0,u_0)=(2080,1560)\,\mathrm{px}$ \\\cline{2-4} 
\multicolumn{1}{|c|}{\multirow{-5}{*}{Camera}} & \multicolumn{2}{c|}{Resolution} & $4160\,\mathrm{px}\times3120\,\mathrm{px}$ \\\hline
\multicolumn{1}{|c|}{LC-VLP} & \multicolumn{2}{c|}{Sampling ratio} & $\varsigma=0.25$ \\\hline
\end{tabular}
}}
\end{table}

Fig.~\ref{fig:exp_imp}\subref{fig:imp} illustrates the hardware block diagram of the proposed positioning prototype system. On the transmitter side, the control program is developed offline on a computer and deployed on an Arduino UNO development board. The LED ID is encoded using Manchester coding and modulated via on-off keying (OOK). A dimming module together with a DD312 driver module drives the LED to flicker at a high frequency, enabling the cyclic transmission of VLC packets containing the LED ID. On the receiver side, the prototype employs the rear camera of a realme Q2i smartphone, which is a CMOS camera capable of receiving VLC signals through RSE while simultaneously capturing visual features such as LED contours. {It should be noted that LC-VLP can be implemented using any CMOS cameras after proper calibration.} Specifically, the camera consecutively captures two frames within a very short time interval using a long (e.g., $6.67\,\mathrm{ms}$) and a short exposure time (e.g., $1.25\,\mathrm{ms}$), respectively. The former is used for extracting visual features, while the latter is used for decoding VLC information to obtain the LED IDs\cite{vpca,vovlp}. {Note that by properly adjusting the short exposure time, clear VLC stripe patterns of different LEDs can still be reliably distinguished even under relatively dense LED deployments.} Owing to the short inter-frame interval, the camera pose is typically assumed to remain unchanged between the two frames\cite{vpca,vovlp}. These measurements are processed by a software application developed on the Android Studio to estimate the camera pose, with the results visualized in real time on a front-end interface such as a map view.

\subsection{Experimental Results}

\begin{figure*}[!t]
\begin{subfigure}{0.315\textwidth}\centering
    \includegraphics[width=\textwidth]{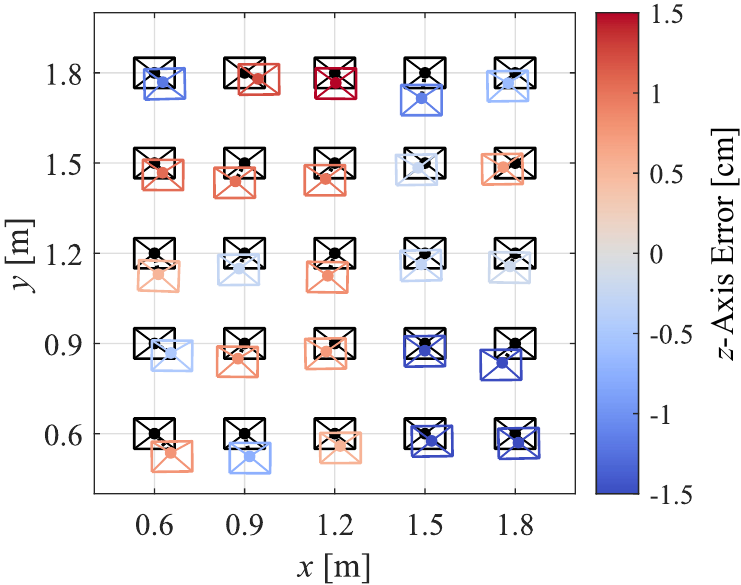}
    \caption{Height $= 1.0\m$.}\label{fig:exp_z10}
\end{subfigure}
\begin{subfigure}{0.315\textwidth}\centering
    \includegraphics[width=\textwidth]{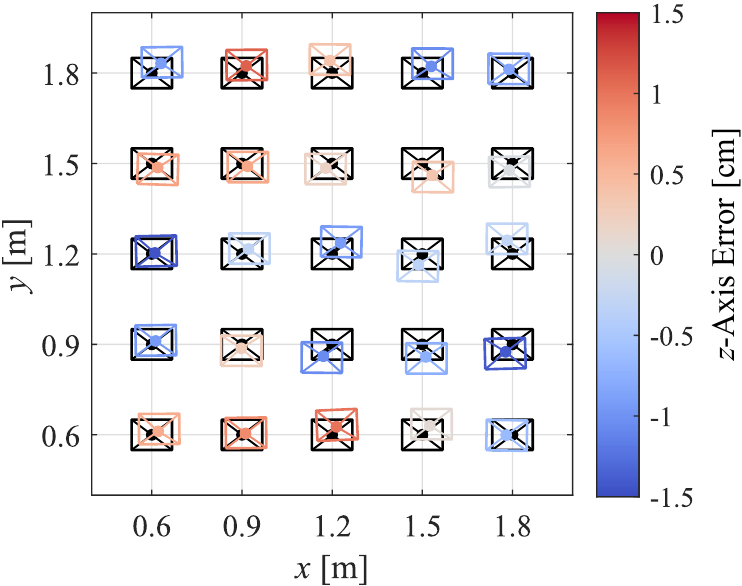}
    \caption{Height $= 1.4\m$.}\label{fig:exp_z14}
\end{subfigure}
\begin{subfigure}{0.315\textwidth}\centering
    \includegraphics[width=\textwidth]{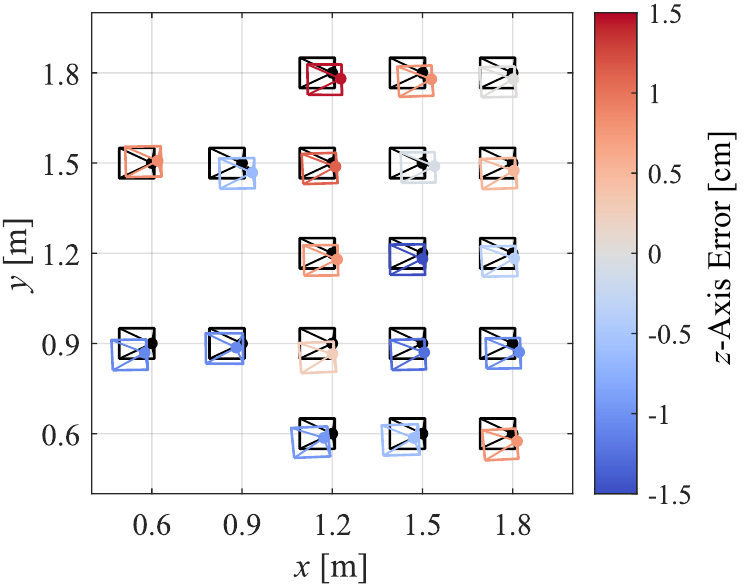}
    \caption{Rotate by $30^\circ$ about $y$-axis.}\label{fig:exp_y_30}
\end{subfigure}
\begin{subfigure}{0.315\textwidth}\centering
    \includegraphics[width=\textwidth]{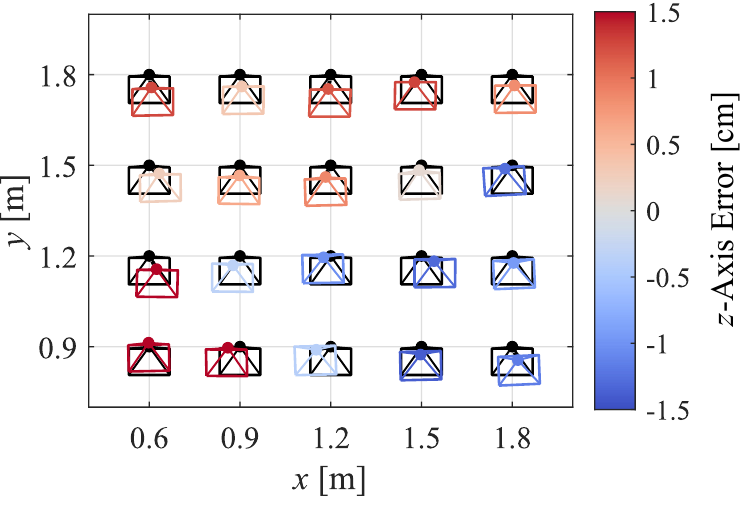}
    \caption{Rotate by $30^\circ$ about $x$-axis.}\label{fig:exp_x_30}
\end{subfigure}
\hspace{0.1cm}
\begin{subfigure}{0.265\textwidth}\centering
    \includegraphics[width=\textwidth]{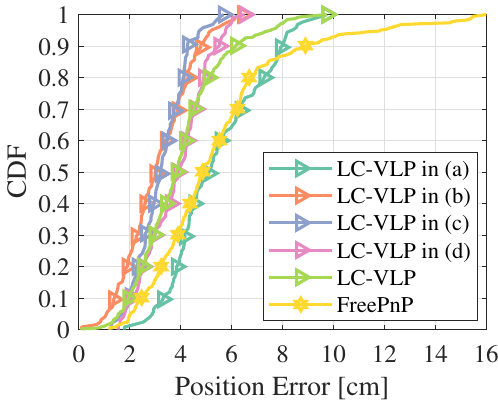}
    \caption{{CDFs of position errors.}}\label{fig:exp_cdf}
\end{subfigure}
\hspace{0.1cm}
\begin{subfigure}{0.265\textwidth}\centering
    \includegraphics[width=\textwidth]{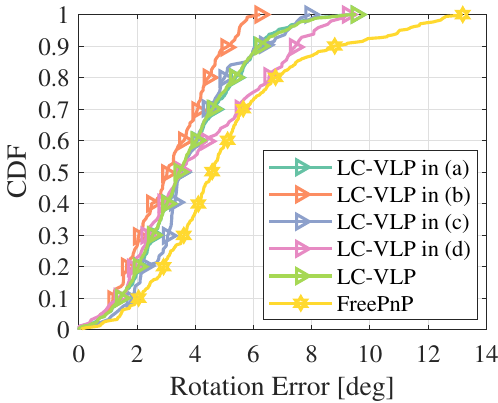}
    \caption{{CDFs of rotation errors.}}\label{fig:exp_rot_cdf}
\end{subfigure}
\centering
\caption{The experimental results of LC-VLP. (a) and (b) show the estimated camera poses (top view) when the smartphone is placed horizontally at different heights; (c) and (d) show the estimated camera poses when the smartphone, positioned at a height of $1.4\m$, is rotated about the $x$- and $y$-axes of WCS, respectively. Note that the black camera FoV cone denote the reference poses, while the colored cones represent the poses estimated by LC-VLP, where the color encodes the height difference along $z$-axis. Finally, (e) and (f) summarize the CDFs of position and rotation errors of LC-VLP and Free\pnp.}
\label{fig:exp}
\end{figure*}

To evaluate the impact of camera height and tilt angle on the performance of LC-VLP, the $2.4\m\times2.4\m$ area was discretized into 25 grid points, and the smartphone camera was placed at each grid point to capture LED images for CPE. {The camera intrinsics and distortion coefficients were calibrated in advance using a standard checkerboard-based calibration method\cite{Zhang2000}, while the extrinsics were determined using floor coordinate grids, an adjustable phone stand, and the smartphone leveling tool, with estimated measurement uncertainties within approximately $[-0.2\cm, 0.2\cm]$ for position and $[-1^\circ, 1^\circ]$ for orientation. For each grid point, we locate the user for five times.} To ensure successful positioning, the camera was required to observe at least two LEDs at all times. When the camera is tilted about the $x$- or $y$-axis, fewer than two LEDs may be visible at certain grid points; such ``unlocalizable" points are excluded from statistical analyses.

In Figs.~\ref{fig:exp}\subref{fig:exp_z10} and \subref{fig:exp_z14}, we first keep the camera horizontal and perform positioning at two different heights, i.e., $1.0\m$ and $1.4\m$, to examine the effect of camera height on the performance of LC-VLP. To more clearly visualize the full 6-DoF camera poses, the top-view camera FoV cone representation in\cite{Kang2025} is adopted. It can be observed that, compared with the results at $1.0\m$, the estimated camera poses at $1.4\m$ are noticeably closer to the reference poses. Specifically, the MPEs at heights of $1.0\m$ and $1.4\m$ are $5.36\cm$ and $3.20\cm$, respectively, while the corresponding MREs are $3.64^\circ$ and $3.11^\circ$. These results indicate that the accuracy of LC-VLP improves as the camera height increases, which is because a higher camera captures larger LED images, which is equivalent to increasing the effective LED size. Enlarging the LED size can lead to higher accuracy, which is consistent with the simulation results. {We also note that a lower camera height can increase the likelihood of observing more LEDs, which can partially mitigate the impact of image noise. However, in our experimental setup, the increase in the number of visible LEDs is limited; therefore, the LED size becomes the dominant influence factor.}

In Figs.~\ref{fig:exp}\subref{fig:exp_y_30} and \subref{fig:exp_x_30}, the camera height is fixed at $1.4\m$, while the camera is rotated by $30^\circ$ about $y$- and $x$-axes, respectively. Together with Fig.~\ref{fig:exp}\subref{fig:exp_z14}, these results are used to evaluate the impact of camera tilt on the performance of LC-VLP. It can be observed that, under different tilt angles, the estimated camera poses remain closely aligned with the reference poses. Specifically, when the camera is rotated by $30^\circ$ about $y$-and $x$-axes, the MPEs are $3.25\cm$ and $3.86\cm$, respectively, while the corresponding MREs are $3.79^\circ$ and $4.09^\circ$. These values are comparable to those obtained at the same height without rotation ($3.20\cm$ and $3.11^\circ$), indicating that camera tilt has a limited impact. 

The above results are summarized in Figs.~\ref{fig:exp}\subref{fig:exp_cdf} and \subref{fig:exp_rot_cdf}. It can be observed that, across all experimental conditions, LC-VLP achieves an overall MPE of $3.96\cm$ and an MRE of $3.65^\circ$. {We also show the results of Free\pnp\ as an ablation study, whose MPE and MRE are $5.67\cm$ and $5.16^\circ$, respectively. These results demonstrate that LC-VLP can maintain high positioning accuracy and pose estimation accuracy under varying camera heights and different camera tilt angles, while Free\pnp\ can provide a relatively reliable initialization for subsequent optimization. Moreover, the number of fully visible LEDs at each of the grid point are provided in Fig.~\ref{fig:exp_vl}. We can see that the experiments cover scenarios with both large and small numbers of visible LEDs. Therefore, this setup is able to evaluate the performance of LC-VLP under different LED visibility conditions.} 

\begin{figure*}[!t]
\begin{subfigure}{0.22\textwidth}\centering
    \includegraphics[width=\textwidth]{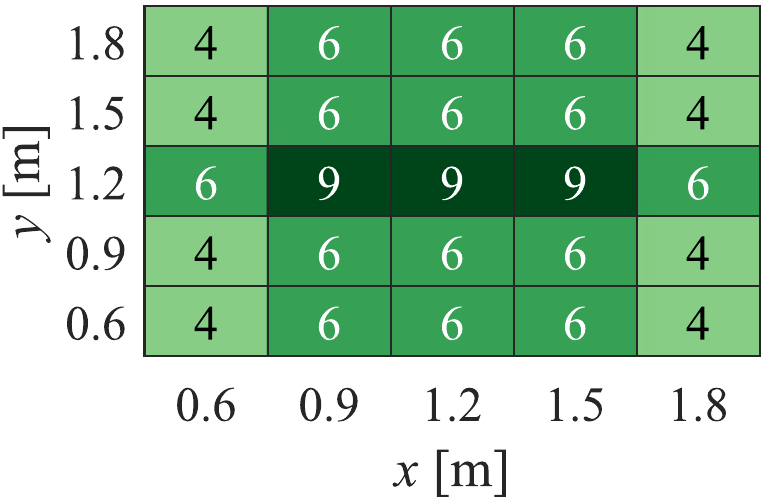}
    \caption{Height $= 1.0\m$.}\label{fig:exp_10}
\end{subfigure}
\begin{subfigure}{0.22\textwidth}\centering
    \includegraphics[width=\textwidth]{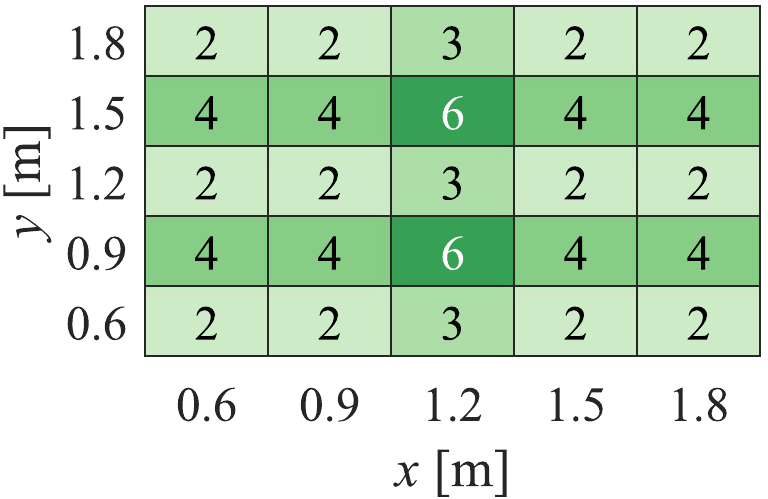}
    \caption{Height $= 1.4\m$.}\label{fig:exp_14}
\end{subfigure}
\begin{subfigure}{0.22\textwidth}\centering
    \includegraphics[width=\textwidth]{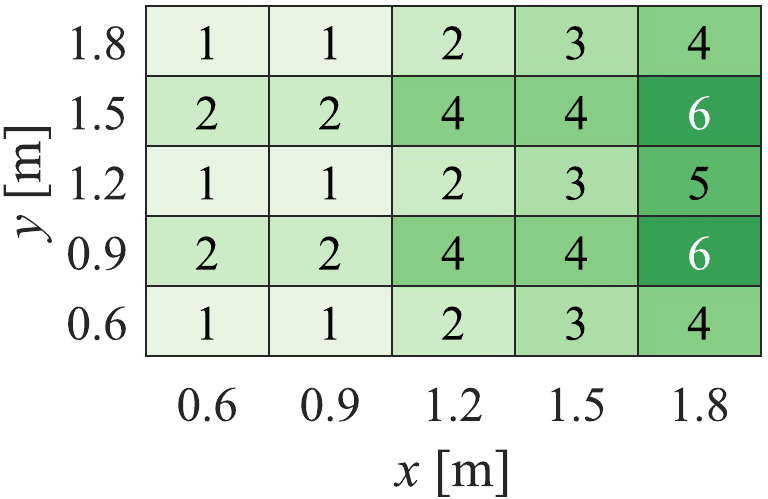}
    \caption{Rotate by $30^\circ$ about $y$-axis.}\label{fig:exp_14y30}
\end{subfigure}
\begin{subfigure}{0.255\textwidth}\centering
    \includegraphics[width=\textwidth]{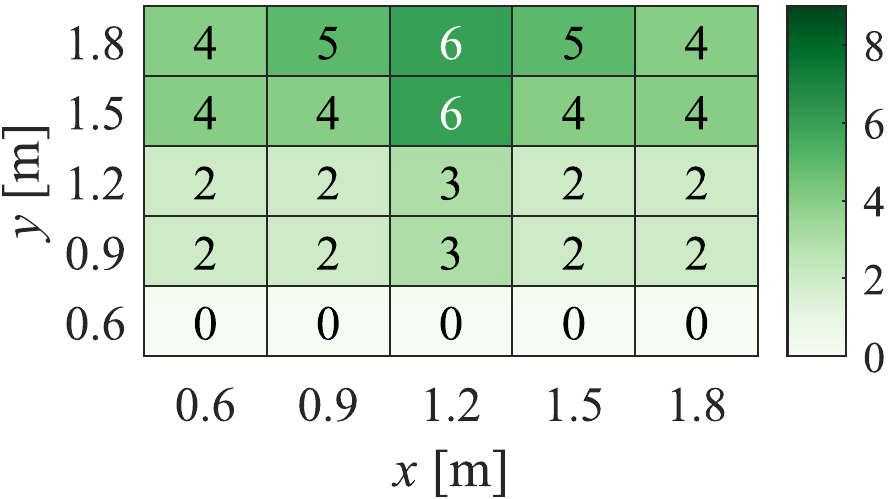}
    \caption{Rotate by $30^\circ$ about $x$-axis.}\label{fig:exp_14x30}
\end{subfigure}
\centering
\caption{{Number of fully visible LEDs in the four cases of the experiments.}}
\label{fig:exp_vl}
\end{figure*}

\begin{figure}[!t]
\begin{subfigure}{0.235\textwidth}\centering
    \includegraphics[width=\textwidth]{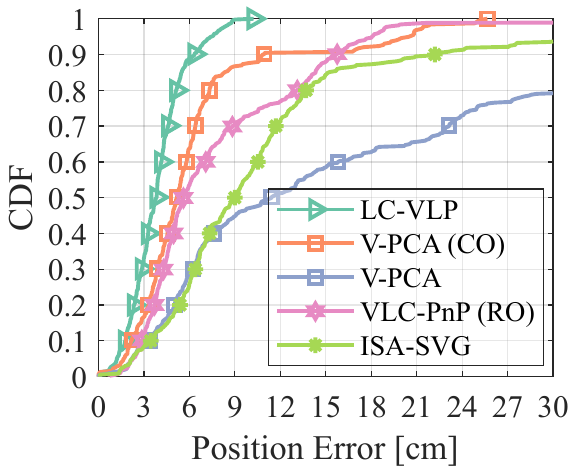}
    \caption{}\label{fig:exp_pos_bl}
\end{subfigure}
\begin{subfigure}{0.235\textwidth}\centering
    \includegraphics[width=\textwidth]{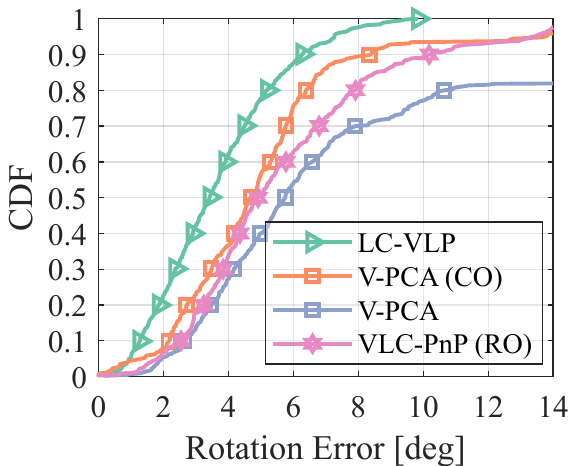}
    \caption{}\label{fig:exp_rot_bl}
\end{subfigure}
\centering
\caption{{Comparison of experimental results between LC-VLP and baseline methods.}}
\label{fig:exp_cdfs_bl}
\end{figure}

{We also conduct V-PCA, VLC-\pnp, and ISA-SVG as baselines in the four cases of the experiments. The results are presented in Fig.~\ref{fig:exp_cdfs_bl}. We note that:
\begin{itemize}
    \item V-PCA is evaluated under two settings: (i) In the first setting, all captured LEDs are treated as circular LEDs, and two of them are randomly selected for positioning; (ii) In the second setting, denoted as ``CO", the square LEDs are ignored by assuming that only the circular-LED subset exists, and the algorithm is applied only when at least two circular LEDs are captured. 
    \item VLC-\pnp\ always ignores circular LEDs and is evaluated only on the square-LED subset, denoted as ``RO".
\end{itemize}
}

{As shown in Fig.~\ref{fig:exp_cdfs_bl}\subref{fig:exp_pos_bl}, LC-VLP achieves the best positioning performance, followed by V-PCA (CO) and VLC-\pnp\ (RO), which operate under their corresponding LED-shape scenarios, with MPEs of $6.23\cm$ and $7.97\cm$, respectively. ISA-SVG ranks next, yielding an MPE of $11.34\cm$ due to the influence of IMU measurement errors. In contrast, V-PCA applied to all heterogeneous LED shapes exhibits the largest MPE, exceeding $20\cm$. Notably, the degradation of LC-VLP is less severe than that observed in simulation Scenario~\ref{sc:4}, since more than half of the LEDs in the experimental scene are still circular. As illustrated in Fig.~\ref{fig:exp_cdfs_bl}\subref{fig:exp_rot_bl}, LC-VLP also achieves the lowest rotation error, outperforming V-PCA (CO) and VLC-\pnp\ (RO), whose MREs are $5.29^\circ$ and $5.79^\circ$, respectively. These results demonstrate that LC-VLP achieves superior performance in scenarios with mixed deployments of different LED shapes.}

\begin{figure*}[!t]
\begin{subfigure}{0.235\textwidth}\centering
    \includegraphics[width=\textwidth]{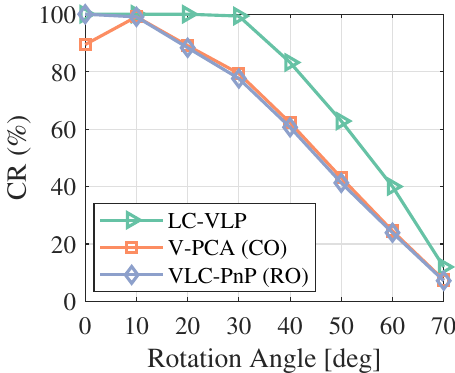}
    \caption{Height $= 1.0\m$, rotate about $y$-axis.}\label{fig:exp_cr10y}
\end{subfigure}
\begin{subfigure}{0.235\textwidth}\centering
    \includegraphics[width=\textwidth]{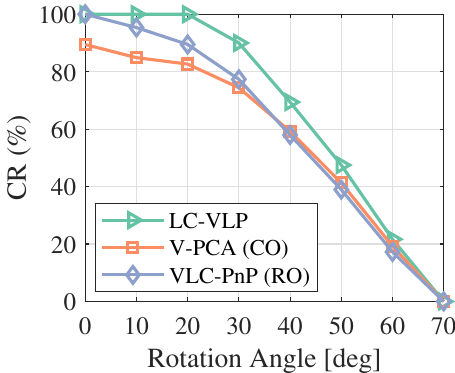}
    \caption{Height $= 1.0\m$, rotate about $x$-axis.}\label{fig:exp_cr10x}
\end{subfigure}
\begin{subfigure}{0.235\textwidth}\centering
    \includegraphics[width=\textwidth]{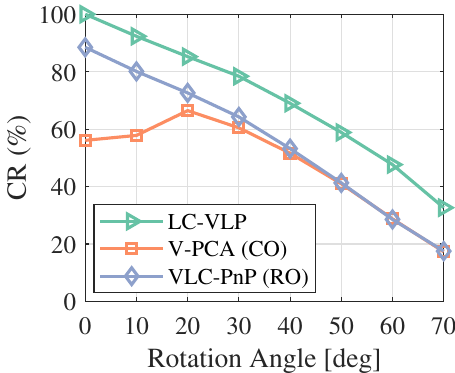}
    \caption{Height $= 1.4\m$, rotate about $y$-axis.}\label{fig:exp_cr14y}
\end{subfigure}
\begin{subfigure}{0.235\textwidth}\centering
    \includegraphics[width=\textwidth]{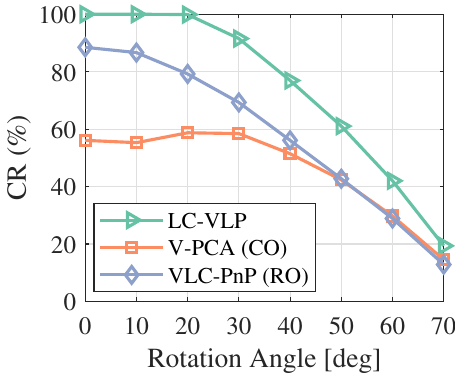}
    \caption{Height $= 1.4\m$, rotate about $x$-axis.}\label{fig:exp_cr14x}
\end{subfigure}
\centering
\caption{{CR comparison of LC-VLP with baselines under four different conditions in the experimental scenario.}}
\label{fig:exp_cr}
\end{figure*}

{We further evaluate the positioning coverage rate (CR) of LC-VLP and baselines through Monte Carlo simulations under the experimental setup within the central region of $[0.6,1.8]\times[0.6,1.8]\m^2$. The CR is defined as the probability that the camera pose can be successfully estimated when the camera, with a fixed orientation, captures LED images on a plane with a fixed height. For LC-VLP and ISA-SVG, the condition is identical, i.e., at least two LEDs of arbitrary shapes must be captured. In contrast, V-PCA (CO) and VLC-\pnp\ (RO) require at least two circular LEDs and one rectangular LED to be observed, respectively. The camera is rotated about the $x$- and $y$-axes at heights of $1.0\m$ and $1.4\m$, respectively. For each camera orientation, $10^4$ random camera positions were generated on the corresponding height plane to estimate the CR statistically. The results are shown in Figs.~\ref{fig:exp_cr}\subref{fig:exp_cr10y}--\subref{fig:exp_cr14x}. It can be observed that, under all four experimental conditions, LC-VLP consistently achieves the highest CR regardless of the camera rotation angle. VLC-\pnp\ (RO) achieves the second-highest CR since it requires fewer LEDs for positioning. These results further verify the effectiveness of LC-VLP in heterogeneous LED-shape scenarios.}

{The average CPU runtimes of LC-VLP, V-PCA, VLC-\pnp, and ISA-SVG are further evaluated on a laptop equipped with an Intel\textsuperscript{\textregistered} Core\textsuperscript{\texttrademark‌} i7-13700H processor, where all algorithms are implemented in Python 3.12. The corresponding average runtimes are $0.1488\,\mathrm{s}$, $0.0238\,\mathrm{s}$, $0.0058\,\mathrm{s}$, and $0.0009\,\mathrm{s}$, respectively. Although LC-VLP introduces additional computational overhead due to the contour-based nonlinear optimization, the proposed method still achieves practical real-time performance while offering significantly improved positioning accuracy.}

{Note that all the above experiments are conducted under normal indoor ambient lighting conditions. The indoor illumination generally does not directly affect the positioning accuracy of the proposed algorithm, since the algorithm mainly depends on the accuracy of the extracted LED contour points. In practice, the camera parameters can always be appropriately adjusted\cite{vpca,vovlp} to make the LEDs appear clearer and more distinguishable in the captured images, thereby facilitating more accurate contour extraction.}

\section{Conclusion}\label{sec:con}
In this paper, a generic LC-VLP algorithm has been proposed to enable high-accuracy full 6-DoF CPE in heterogeneous LED-shape scenarios. LC-VLP adopts a unified parametric modeling based on Lam\'e curves, which can represent a wide range of commonly-used LED shapes. With only the parameters of the captured Lam\'e curve-shaped LEDs, full 6-DoF CPE can be performed. Specifically, an LED database is constructed in advance to associate LED IDs with their corresponding curve parameters. When positioning, a back-projection strategy is introduced to formulate CPE as an NLLS optimization problem, where the algebraic distances between back-projected points and the corresponding LED curves are iteratively minimized. To provide a reliable initialization, a \freepnp\ algorithm is further developed, which enables approximate pose estimation without requiring any pre-calibrated 3D--2D RP correspondences. The performance of LC-VLP has been verified through substantial simulations and experiments. In particular, simulation results show that LC-VLP outperforms SoTA methods in homogeneous LED-shape scenarios, achieving reductions of over 30\% in average position and rotation errors; experiments further show that LC-VLP can achieve an average position accuracy of less than 4 cm in a heterogeneous LED-shape scenario. Therefore, LC-VLP holds strong potential for deployment in real-world indoor positioning-enabled applications. {In future work, it would be interesting to extend LC-VLP to stereo-camera or multi-camera cooperative positioning scenarios, investigate single-LED positioning methods, and optimize LED deployment layouts for improved coverage and positioning performance.}

\appendices
\section{Proof of Theorem~\ref{thm:ioc}} \label{app:ioc}
\begin{IEEEproof}
Let the camera optical center in WCS be denoted by $\bm{c}$. Consider three points with world coordinates $\bm{x}_k$, $k=1,2,3$, and let the pixel coordinates of their corresponding projections on the image plane be $\bm{u}_k^{\mathrm{p}}$.

\textit{(Sufficiency)}
Since the three points $\bm{x}_k$ are collinear, there exists a non-zero scalar $\lambda$ such that:
\begin{equation}
\label{eqn:w}
    \bm{x}_1-\bm{x}_2=\lambda(\bm{x}_2-\bm{x}_3).
\end{equation}
Substituting \eqref{eqn:w2c} into \eqref{eqn:w} yields:
\begin{align}
    \bm{x}_1^\mathrm{c}-\bm{x}_2^\mathrm{c} &=\bm{R}(\bm{x}_1-\bm{x}_2)+\bm{t}-\bm{t}=\lambda\bm{R}(\bm{x}_2-\bm{x}_3)\notag\\
    &=\lambda\big((\bm{R}\bm{x}_2+\bm{t})\!-\!(\bm{R}\bm{x}_3+\bm{t})\big)=\lambda(\bm{x}_2^\mathrm{c}-\bm{x}_3^\mathrm{c}),
\end{align}
which shows that the points $\bm{x}_k$ remain collinear in CCS, i.e., their camera coordinates $\bm{x}_k^{\mathrm{c}}$ are linearly dependent. Concatenating $\bm{x}_k^{\mathrm{c}}$, $k=1,2,3$ into a matrix, we have:
\begin{equation}
\label{eqn:ccc}
    \det\big[ \bm{x}_1^\mathrm{c},\bm{x}_2^\mathrm{c},\bm{x}_3^\mathrm{c} \big]=0.
\end{equation}
Let $s_k$ denote the $z$-coordinate of $\bm{x}_k^{\mathrm{c}}$. By multiplying both sides of \eqref{eqn:ccc} by $\tfrac{1}{s_1s_2s_3}$, we obtain:
\begin{equation}
    \det\big[ \tfrac{1}{s_1}\bm{x}_1^\mathrm{c}, \tfrac{1}{s_2}\bm{x}_2^\mathrm{c}, \tfrac{1}{s_3}\bm{x}_3^\mathrm{c} \big]=\tfrac{1}{s_1 s_2 s_3}\det\big[ \bm{x}_1^\mathrm{c},\bm{x}_2^\mathrm{c},\bm{x}_3^\mathrm{c} \big]=0.
\end{equation}
Furthermore, since the intrinsic matrix $\bm{K}$ is invertible, multiplying $\det(\bm{K}^{-1})$ yields:
\begin{equation}
    \det(\bm{K}^{-1}) \cdot \det\big[ \tfrac{1}{s_1}\bm{x}_1^\mathrm{c}, \tfrac{1}{s_2}\bm{x}_2^\mathrm{c}, \tfrac{1}{s_3}\bm{x}_3^\mathrm{c} \big]=0.
\end{equation}
Then from \eqref{eqn:c2p} we have:
\begin{equation}
    \det\!\big[ \widetilde{\bm{u}}_1^\mathrm{p},\widetilde{\bm{u}}_2^\mathrm{p},\widetilde{\bm{u}}_3^\mathrm{p} \big] \!=\! \det\!\Big\{\bm{K}^{-1}\big[ \tfrac{1}{s_1}\bm{x}_1^\mathrm{c},\tfrac{1}{s_2}\bm{x}_2^\mathrm{c},\tfrac{1}{s_3}\bm{x}_3^\mathrm{c} \big]\Big\}\!=\!0,
\end{equation}
which means the $\widetilde{\bm{u}}_k^\mathrm{p}$, $k=1,2,3$ are linearly dependent, i.e., there exist $\lambda_2\lambda_3\neq0$ such that $\widetilde{\bm{u}}_1^\mathrm{p}+\lambda_2\widetilde{\bm{u}}_2^\mathrm{p}+\lambda_3\widetilde{\bm{u}}_3^\mathrm{p}=\mathbf{0}$. Since $\widetilde{\bm{u}}_k^\mathrm{p}=[(\bm{u}_k^\mathrm{p})^\T,1]^\T$, we also have:
\begin{subequations}
\label{eqn:ppp}
\begin{numcases}{}
    \bm{u}_1^\mathrm{p}+\lambda_2\bm{u}_2^\mathrm{p}+\lambda_3\bm{u}_3^\mathrm{p}=\mathbf{0},\\
    1+\lambda_2+\lambda_3=0,
\end{numcases}
\end{subequations}
which clearly shows the collinearity of the projected points $\bm{u}_k^\mathrm{p}$, $k=1,2,3$ in PCS.

\textit{(Necessity)} Since the projected points $\bm{u}_k^\mathrm{p}$, $k=1,2,3$ are collinear in PCS, we have \eqref{eqn:ppp}, and we can still backtrack the above procedure up to \eqref{eqn:ccc}. \eqref{eqn:ccc} shows the linear dependency of $\bm{x}_k^\mathrm{c}$, which means there exist $\alpha_1\alpha_2\alpha_3\neq0$ such that:
\begin{equation}
\label{eqn:ccc2}
    \alpha_1\bm{x}_1^\mathrm{c}+\alpha_2\bm{x}_2^\mathrm{c}+\alpha_3\bm{x}_3^\mathrm{c}=\mathbf{0}.
\end{equation}
Substituting \eqref{eqn:w2c} into \eqref{eqn:ccc2}, we have:
\begin{equation}\label{eqn:aaa}
    \alpha_1\bm{x}_1+\alpha_2\bm{x}_2+\alpha_3\bm{x}_3=-(\alpha_1+\alpha_2+\alpha_3)\bm{R}^\T\bm{t}.
\end{equation}
According to \eqref{eqn:tar}, the world coordinates of the camera optical center are $\bm{c}=-\bm{R}^\T\bm{t}$, which can be substituted into \eqref{eqn:aaa}:
\begin{equation}
    \alpha_1\bm{x}_1+\alpha_2\bm{x}_2+\alpha_3\bm{x}_3-(\alpha_1+\alpha_2+\alpha_3)\bm{c}=\mathbf{0}.
\end{equation}
This implies that if $\sum_{k=1}^3\alpha_k\neq0$, then $\bm{x}_k$, $k=1,2,3$ and $\bm{c}$ are always coplanar. Since there exists a plane that contains $\bm{x}_k$, $k=1,2,3$, but does not contain $\bm{c}$, it should hold that $\sum_{k=1}^3\alpha_k=0$ which yields:
\begin{equation}
    \bm{x}_1-\bm{x}_3=-\tfrac{\alpha_2}{\alpha_1}(\bm{x}_2-\bm{x}_3).
\end{equation}
This clearly demonstrates the collinearity of points $\bm{x}_k$, $k=1,2,3$ in WCS. Theorem~\ref{thm:ioc} is proved.
\end{IEEEproof}

\section{Proof of Proposition~\ref{pro:samp}}\label{app:samp}
\begin{IEEEproof}
To prove Proposition~\ref{pro:samp}, it suffices to show that the arc-length differential $\mathrm{d}s$ of the projected curve is approximately proportional to the polar-angle differential $\mathrm{d}\theta$ along the original LED boundary curve. 

For simplicity, we assume that the LED center is located at the origin of WCS (equivalently, CyCS), and its $\phi=0$. We consider a point on the LED boundary curve parameterized by the polar angle $\theta$. Its coordinates in CCS are obtained from WCS via the transformation $\bm{\zeta}(\theta)$. Moreover, we define the projection function $\bm{\varpi}(\bm{x}^{\mathrm{c}})$ that maps a point in CCS onto PCS. These two mappings are respectively given by:
{\begin{subequations}
\begin{numcases}{}
    \bm{\zeta}(\theta)=\bm{R}\rho(\theta)\bm{\psi}(\theta)+\bm{t},\\
    \bm{\varpi}(\bm{x}^\mathrm{c})= \frac{f}{z^\mathrm{c}} \begin{bmatrix} x^\mathrm{c}\\y^\mathrm{c} \end{bmatrix} + \begin{bmatrix} u_0\tfrac{\delta x}{\delta u}\\v_0\tfrac{\delta y}{\delta v} \end{bmatrix}.\label{equ:pcs}
\end{numcases}
\end{subequations}
We note that the ``PCS" in \eqref{equ:pcs} shares the same scale with CCS and WCS by multiplying the physical size of pixels $\tfrac{\delta x}{\delta u}$ and $\tfrac{\delta y}{\delta v}$.} It follows that the projected LED curve can be written as $\bm{s}=\bm{\varpi}(\bm{\zeta}(\theta))$. By applying the chain rule, we directly differentiate $\bm{s}$ with respect to $\theta$, yielding:
\begin{align}
    \frac{\mathrm{d}\bm{s}}{\mathrm{d}\theta} &= \frac{\mathrm{d}\bm{\varpi}}{\mathrm{d}\bm{\zeta}} \cdot \frac{\mathrm{d}\bm{\zeta}}{\mathrm{d}\theta}
    = \frac{f}{z^\mathrm{c}} \begin{bmatrix}1&0&\tfrac{-x^\mathrm{c}}{z^\mathrm{c}}\\0&1&\tfrac{-y^\mathrm{c}}{z^\mathrm{c}}\end{bmatrix} \cdot \bm{R}\begin{bmatrix}\tfrac{\mathrm{d}\rho}{\mathrm{d}\theta}\cos\theta-\rho\sin\theta\\\tfrac{\mathrm{d}\rho}{\mathrm{d}\theta}\sin\theta+\rho\cos\theta\\0\end{bmatrix}\notag\\
    &\triangleq \frac{f}{z^\mathrm{c}} \bm{\Xi} \cdot \bm{R}\bm{\rho}.
\end{align}
Therefore, we have:
\begin{equation}\label{eqn:a_dsdt}
    \left\|\frac{\mathrm{d}\bm{s}}{\mathrm{d}\theta}\right\| \leq \frac{f}{z^\mathrm{c}} \| \bm{\Xi}\|_2 \cdot\|\bm{R}\bm{\rho}\| = \frac{f}{z^\mathrm{c}} \| \bm{\Xi}\|_2 \cdot\|\bm{\rho}\|,
\end{equation}
where $\left\|\bm{\Xi}\right\|_2 = \sqrt{\lambda^{\max}_{(\bm{\Xi}\bm{\Xi}^\T)}}$ denotes the 2-norm\cite{Nicholas1992} of $\bm{\Xi}$ and $\lambda^{\max}_{(\bm{\Xi}\bm{\Xi}^\T)}$ is the maximum eigenvalue of $\bm{\Xi}\bm{\Xi}^\T$. We first calculate: 
\begin{equation}
    \bm{\Xi}\bm{\Xi}^\T=\begin{bmatrix}
        1+\big(\tfrac{x^\mathrm{c}}{z^\mathrm{c}}\big)^2 & \tfrac{x^\mathrm{c}}{z^\mathrm{c}}\cdot\tfrac{y^\mathrm{c}}{z^\mathrm{c}} \\
        \tfrac{x^\mathrm{c}}{z^\mathrm{c}}\cdot\tfrac{y^\mathrm{c}}{z^\mathrm{c}} & 1+\big(\tfrac{y^\mathrm{c}}{z^\mathrm{c}}\big)^2
    \end{bmatrix}.
\end{equation}
Then, let $\det(\bm{\Xi}\bm{\Xi}^\T-\lambda\bm{I})=0$ and we obtain the eigenvalues of $\bm{\Xi}\bm{\Xi}^\T$ by solving:
\begin{equation}\label{eqn:eigen_eq}
    \lambda^2-\Big(2+\big(\tfrac{x^\mathrm{c}}{z^\mathrm{c}}\big)^2+\big(\tfrac{y^\mathrm{c}}{z^\mathrm{c}}\big)^2\Big)\lambda+\Big(1+\big(\tfrac{x^\mathrm{c}}{z^\mathrm{c}}\big)^2+\big(\tfrac{y^\mathrm{c}}{z^\mathrm{c}}\big)^2\Big)=0,
\end{equation}
of which the larger root $\lambda=1+\big(\tfrac{x^\mathrm{c}}{z^\mathrm{c}}\big)^2+\big(\tfrac{y^\mathrm{c}}{z^\mathrm{c}}\big)^2$ is taken as the maximum eigenvalue $\lambda^{\max}_{(\bm{\Xi}\bm{\Xi}^\T)}$. Therefore, the $\left\|\bm{\Xi}\right\|_2$ is calculated as:
\begin{equation}\label{eqn:a_Cau}
    \left\|\bm{\Xi}\right\|_2 = \sqrt{\lambda^{\max}_{(\bm{\Xi}\bm{\Xi}^\T)}} = \sqrt{1+\big(\tfrac{x^\mathrm{c}}{z^\mathrm{c}}\big)^2+\big(\tfrac{y^\mathrm{c}}{z^\mathrm{c}}\big)^2}.
\end{equation}

Furthermore, since the LED is captured by the camera, it must lie within the camera's FoV. This leads to the following constraints:
\begin{subequations}
\begin{numcases}{}
    0\leq\frac{|x^\mathrm{c}|}{z^\mathrm{c}}\leq\tan\alpha_y,\\
    0\leq\frac{|y^\mathrm{c}|}{z^\mathrm{c}}\leq\tan\alpha_x,
\end{numcases}
\end{subequations}
which can then be substituted into \eqref{eqn:a_Cau}, resulting in:
\begin{equation}\label{eqn:a_dpdz}
    \|\bm{\Xi}\|_2
    \leq \sqrt{1+\tan^2\alpha_y+\tan^2\alpha_x} \triangleq \Omega_{\text{FoV}}.
\end{equation}
Substituting \eqref{eqn:a_dpdz} into \eqref{eqn:a_dsdt}, we obtain that the objective ratio $\frac{\mathrm{d}s}{\mathrm{d}\theta}$ satisfies:
\begin{align}\label{eqn:a_dsdt2}
    \frac{\mathrm{d}s}{\mathrm{d}\theta} &= \left\|\frac{\mathrm{d}\bm{s}}{\mathrm{d}\theta}\right\| \leq \frac{f}{z^\mathrm{c}}\Omega_{\text{FoV}}\|\bm{\rho}\| \notag\\
    &= \frac{f}{z^\mathrm{c}}\Omega_{\text{FoV}} \sqrt{\left(\frac{\mathrm{d}\rho}{\mathrm{d}\theta}\right)^2+\rho^2} = \frac{f}{z^\mathrm{c}}\Omega_{\text{FoV}} \frac{\mathrm{d}\ell}{\mathrm{d}\theta}.
\end{align}
We can observe that $\frac{\mathrm{d}\ell}{\mathrm{d}\theta}$ corresponds exactly to the arc-length differential of the original LED curve in WCS, which naturally satisfies:
\begin{equation}
    \int_{0}^{2\pi}\frac{\mathrm{d}\ell}{\mathrm{d}\theta}\mathrm{d}\theta=L,\quad\frac{\mathrm{d}\ell}{\mathrm{d}\theta}>0,
\end{equation}
where $L$ denotes the perimeter of the LED. This implies that the arc-length differential $\frac{\mathrm{d}\ell}{\mathrm{d}\theta}$ is bounded, nonnegative, and varies smoothly with respect to $\theta$. 

{Moreover, in typical indoor positioning scenarios, it holds that $f\Omega_{\text{FoV}}\ll z^\mathrm{c}$, since $f$ is typically at the millimeter scale; constant $\Omega_{\text{FoV}}$ is on the same order of magnitude as $1$; whereas $z^\mathrm{c}$ is usually at the meter scale with sufficient camera-to-LED distance.} Substituting this condition into \eqref{eqn:a_dsdt2}, we observe that $\frac{\mathrm{d}s}{\mathrm{d}\theta}$ also varies slowly with $\theta$. Consequently, the projected arc-length differential $\mathrm{d}s$ is approximately proportional to the polar-angle differential $\mathrm{d}\theta$. 
Proposition~\ref{pro:samp} is proved.
\end{IEEEproof}

\bibliography{references}
\bibliographystyle{IEEEtran}

\end{document}